\documentclass[a4paper,12pt]{article}
\title{Hermitian hull-variation of vector rank-metric codes and self-orthogonal generalized Gabidulin codes}
\author{Duy Ho \\
Department of Mathematical Sciences, \\
UAE University, PO Box 15551, Al Ain, UAE\\
Email: duyho92@gmail.com}
\date{}
\usepackage{parskip}
\usepackage{todonotes}
\usepackage{amsthm,amsmath,amssymb}
\usepackage{url}
\usepackage{booktabs}
\usepackage{enumitem}
\allowdisplaybreaks
\setenumerate[0]{label=(\roman*)}

\begin{document}
\maketitle
\theoremstyle{plain}
\newtheorem{lemma}{Lemma}[section]
\newtheorem{theorem}[lemma]{Theorem}
\newtheorem{corollary}[lemma]{Corollary}
\newtheorem{proposition}[lemma]{Proposition}
\theoremstyle{definition}
\newtheorem{definition}[lemma]{Definition}
\newtheorem{remark}[lemma]{Remark}
\newtheorem{example}[lemma]{Example}
\newtheorem{question}[lemma]{Open problem}
\newcommand{\eps}{\varepsilon}
\newcommand{\inprod}[1]{\left\langle #1 \right\rangle}
\newcommand{\la}{\lambda}
\newcommand{\al}{\alpha}
\newcommand{\om}{\omega}
\newcommand{\gam}{\gamma}
\newcommand{\be}{\beta}
\newcommand{\sig}{\sigma}
\newcommand\rank{\mathrm{rank}}
\newcommand\nullity{\mathrm{nullity}}
\newcommand{\F}{\mathbb{F}}
\newcommand{\E}{\mathcal{E}}
\newcommand{\M}{\mathcal{M}}
\newcommand{\Tr}{\mathrm{Tr}}
\newcommand{\GL}{\mathrm{GL}}
\newcommand{\Z}{\mathbb{Z}}

\begin{abstract}
We study the Hermitian hull-variation problem for vector rank-metric codes. 
Except for one parameter pair, we show that the Hermitian hull dimension of such a code can be reduced to any smaller value within its equivalence class, and in particular every such code is equivalent to a Hermitian LCD code. 
We then address the existence of maximum rank distance (MRD) codes with prescribed Hermitian hull dimension. To this end, we introduce the notion of a \emph{scaled trace-self-dual basis} of a finite field extension, which exists in all cases, and use it to construct Hermitian self-orthogonal generalized Gabidulin codes for every prime power. Combined with the hull-variation theorem, this yields MRD codes attaining every admissible Hermitian hull dimension.
\end{abstract}

\medskip
\noindent\textbf{Keywords:} Vector rank-metric codes, Hermitian hull, hull-variation, Gabidulin codes, generalized Gabidulin codes, Hermitian self-orthogonal codes, MRD codes, scaled trace-self-dual basis.

\medskip
\noindent\textbf{MSC (2020):} Primary 94B05; Secondary 11T71, 15A03.

\section{Introduction}

Let $C$ be a linear code over a finite field, equipped with either the Euclidean or the Hermitian inner product on its ambient space. The \emph{hull} of $C$ is the intersection
\[
H(C) := C \cap C^\perp,
\]
where $C^\perp$ denotes the dual of $C$ with respect to the chosen inner product. The hull is a fundamental subcode of $C$, and its dimension governs several constructions. 
Codes with trivial hull, called \emph{linear complementary dual} (LCD) codes, were used by Carlet and Guilley~\cite{carlet2016} as countermeasures against side-channel attacks.
In 2018, Guenda, Jitman, and Gulliver~\cite{guenda2018} showed that the hull dimension of $C$ equals the number of maximally entangled pairs required to construct an entanglement-assisted quantum error-correcting code (EAQECC) from $C$, in the framework introduced by Brun, Devetak, and Hsieh~\cite{brun2006}.
Understanding the behavior of the hull dimension under code equivalence is thus a natural and well-motivated problem.

In~\cite{carlet2018}, Carlet, Mesnager, Tang, Qi, and Pellikaan proved that every $[n,k]$ linear code over $\F_q$ is equivalent to a Euclidean LCD code provided $q > 3$, and that every $[n,k]$ linear code over $\F_{q^2}$ is equivalent to a Hermitian LCD code provided $q > 2$. The latter result was sharpened in~\cite{chen2023, ling2024}. There it was shown that for every linear code $C$ over $\F_{q^2}$ ($q > 2$) with Hermitian hull dimension $h$ and every $\ell \in \{0, 1, \ldots, h\}$, there exists a code equivalent to $C$ with Hermitian hull dimension exactly $\ell$. 
This phenomenon leads to the so-called hull-variation problem formulated by Hao Chen in 2023, see \cite{chen2023-2}. The problem is stated as follows. 

\bigskip
\textit{Hull-Variation Problem:} When a linear code $C$ is transformed to an equivalent linear code $C'$, how is its Euclidean or Hermitian hull changed?
\bigskip

In parallel with these developments in the Hamming metric, the rank metric, introduced by Delsarte~\cite{delsarte1978} and Gabidulin~\cite{gabidulin1985}, has emerged as a robust framework with applications in network coding~\cite{koetter2008, silva2008} and, more recently, in quantum error correction. Delfosse and Z\'emor~\cite{delfosse2024} propose quantum Gabidulin codes via a CSS-type construction to correct circuit faults in a stacked quantum memory, requiring a self-dual normal basis of $\F_{2^n}/\F_2$ and thereby restricting the memory to $n \times n$ with $n$ odd. 
Nizuka and Matsumoto~\cite{matsumoto2026} subsequently give an alternative construction over $\F_{2^{2m}}/\F_2$ that uses Hermitian self-orthogonal Gabidulin codes in the framework of Matsumoto and Uyematsu~\cite{matsumoto2000}.
This removes the odd-length restriction and improves the relative rank distance.
The Hermitian inner product on vector rank-metric codes is thus a natural object of study in this setting, and motivates an analogue of the hull-variation problem for vector rank-metric codes.

In~\cite{ho2026}, the author and Johnsen investigated the Euclidean hull-variation problem for vector rank-metric codes. The present paper first addresses the analogous Hermitian hull-variation problem. We show that, outside the single case $(q,n) = (2,2)$, every Hermitian hull dimension up to $\dim H(C)$ is attained within the equivalence class of $C$. In particular, every such code is equivalent to a Hermitian LCD code.
 
Among rank-metric codes, those attaining the Singleton-like bound on minimum rank distance are called \emph{maximum rank distance (MRD)} codes, and play the role of MDS codes in the rank metric. The second part of the paper addresses the existence of MRD codes with prescribed Hermitian hull dimension. Islam and Horlemann~\cite{islam2022} studied the Galois hull dimensions of Gabidulin codes and, in particular, constructed Hermitian self-orthogonal Gabidulin codes whenever a trace-self-dual $\F_q$-basis of $\F_{q^{2m}}$ exists. By Seroussi--Lempel~\cite{seroussi1980} and Jungnickel--Menezes--Vanstone~\cite{jungnickel1990}, such a basis exists if and only if $q$ is even, leaving the case $q$ odd open. We resolve this by introducing the notion of a \emph{scaled trace-self-dual basis}. We show that such a basis exists for every pair $(q, m)$, and we use it to construct Hermitian self-orthogonal generalized Gabidulin codes for every prime power $q$. Combined with the hull-variation theorem above, this gives MRD codes in $\F_{q^{2m}}^{2m}$ with parameters $[2m, k, 2m-k+1]_{q^{2m}/q}$ and Hermitian hull dimension $\ell$, for every $(q,m) \neq (2,1)$, every $1 \le k \le m$, and every $0 \le \ell \le k$.

The paper is organized as follows. In Section~\ref{sec:prelim} we recall the necessary background. Section~\ref{sec:hullvariation} establishes the Hermitian hull-variation theorem and identifies the obstruction in the case $(q, n) = (2, 2)$. Section~\ref{sec:prescribed-hull} constructs Hermitian self-orthogonal MRD codes from scaled trace-self-dual bases and combines them with the hull-variation construction to obtain MRD codes with each prescribed Hermitian hull dimension. Section~\ref{sec:examples} provides explicit examples over $\F_{16}$ and $\F_9$. In Section~\ref{sec:euclidean-comparison}, we discuss the analogous questions for the Euclidean inner product, complete a case left open in~\cite{ho2026}, and observe that, in contrast to the Hermitian setting, Euclidean self-orthogonal MRD codes do not exist in even characteristic.

\section{Preliminaries}\label{sec:prelim}

 Let $q$ be a prime power  and let $m\geq 1$. Let
$ K := \F_{q^{2m}},   F := \F_{q^m}. $ 
The \emph{trace} and \emph{norm} of $K/\F_q$ are the maps
\[
\Tr_{K/\F_q} : K \to \F_q, \qquad \Tr_{K/\F_q}(x) := \sum_{i=0}^{2m-1} x^{q^i},
\]
\[
N_{K/\F_q} : K \to \F_q, \qquad N_{K/\F_q}(x) := \prod_{i=0}^{2m-1} x^{q^i} = x^{(q^{2m}-1)/(q-1)}.
\]
Throughout, we will use $\Tr := \Tr_{K/\F_q}$ and $N := N_{K/\F_q}$ unless otherwise noted.

We write $\mathbf{1}_n := (1, 1, \ldots, 1) \in K^n$ for the all-ones vector, and $K\mathbf{1}_n := \{\lambda \mathbf{1}_n : \lambda \in K\}$ for the line it spans. Also, $e_i \in K^n$ denotes the $i$-th standard basis vector. 
\subsection{Vector rank-metric codes and generalized Gabidulin codes}
 
 An $[n,k]_{q^{2m}/q}$ \emph{vector rank-metric code} is a $k$-dimensional $K$-subspace $C\subseteq K^n$ equipped with the rank distance
\[
d_R(u,v) := \dim_{\F_q}\langle u_1-v_1,\ldots,u_n-v_n\rangle_{\F_q}.
\]
The \emph{minimum rank distance} of $C$ is
\[
d_R(C) := \min\{d_R(c, c') : c, c' \in C,\ c \neq c'\}.
\]
A code $C$ with parameters as above and minimum rank distance $d_R(C) = d$ is denoted $[n, k, d]_{q^{2m}/q}$. When $n \leq 2m$, the Singleton-like bound for the rank metric gives $d_R(C) \leq n - k + 1$. A code meeting this bound is \emph{maximum rank distance} (MRD).
Two such codes $C,C'$ are \emph{equivalent}  if there exists $A\in\GL_n(\F_q)$ and generator matrices $G,G'$ of $C,C'$ respectively such that $G' = GA$. This equivalence preserves both the dimension and the rank distance, see for example \cite{gorla2021}.

Let $\alpha = (\alpha_1,\ldots,\alpha_{2m})$ be an $\F_q$-basis of $K$, and let $s$ be a positive integer with $\gcd(s,2m)=1$. The \emph{generalized Gabidulin code} $G_{k,s}(\alpha)$ is the $K$-linear code of length $2m$ and dimension $k$ generated by
\[
G^{(s)} = \begin{pmatrix} \alpha \\ \alpha^{q^s} \\ \vdots \\ \alpha^{q^{s(k-1)}} \end{pmatrix},
\qquad
\alpha^{q^a} := (\alpha_1^{q^a}, \ldots, \alpha_{2m}^{q^a}).
\]
 The construction was introduced by Kshevetskiy and Gabidulin~\cite{kshevetskiy2005}, which generalizes  Gabidulin's original construction~\cite{gabidulin1985} from $s=1$ to general $s$ coprime to $2m$.

\begin{lemma}[\cite{kshevetskiy2005}, see also~\cite{gabidulin1985}]\label{lem:GabMRD}
For any $\F_q$-basis $\alpha$ of $K$ and any $s$ with $\gcd(s,2m)=1$, the code $G_{k,s}(\alpha)$ is MRD with parameters $[2m,k,2m-k+1]_{q^{2m}/q}$.
\end{lemma}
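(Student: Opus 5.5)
The plan is to show that every nonzero codeword of $G_{k,s}(\alpha)$ has rank weight at least $2m-k+1$. Combined with the Singleton-like bound $d_R(C)\le 2m-k+1$ recalled above, this gives equality. Since $G_{k,s}(\alpha)$ is $K$-linear, $d_R$ is the minimum rank weight of a nonzero codeword. We also need $\dim_K G_{k,s}(\alpha)=k$, that is, the rows of $G^{(s)}$ are $K$-linearly independent. This will follow from the weight bound itself, because a nontrivial $K$-relation among the rows would be a codeword of weight $0$ coming from a nonzero polynomial.

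A codeword has the form $c=(f(\alpha_1),\ldots,f(\alpha_{2m}))$, where
\[
f(x)=\sum_{i=0}^{k-1} f_i x^{q^{si}}, \qquad f_i\in K,
\]
is an $s$-linearized polynomial. The map $f$ is $\F_q$-linear on $K$, and $\alpha$ is an $\F_q$-basis. Hence the $\F_q$-span of the entries of $c$ is exactly $f(K)$. Its rank weight is therefore $\dim_{\F_q} f(K) = 2m-\dim_{\F_q}\ker f$. It suffices to show that if $f\neq 0$ then $\dim_{\F_q}\ker f\le k-1$.

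This kernel bound is the key step and the main obstacle. For $s=1$ it is immediate: $f$ has $q$-degree at most $q^{k-1}$, so it has at most $q^{k-1}$ roots. For general $s$ the ordinary degree is $q^{s(k-1)}$, which is far too large, so we use the Galois structure instead. Put $\sigma=\mathrm{Frob}_q^s$. Since $\gcd(s,2m)=1$, $\sigma$ generates $\mathrm{Gal}(K/\F_q)$, and its fixed field is $\F_q$. Then $f=\sum_{i<k} f_i\sigma^i$ is a $\sigma$-polynomial of $\sigma$-degree at most $k-1$.

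To bound the kernel, suppose $\beta_1,\ldots,\beta_k\in\ker f$ are $\F_q$-linearly independent. Then $M\,(f_0,\ldots,f_{k-1})^T=0$, where $M=(\sigma^i(\beta_j))$ is the $k\times k$ Moore-type matrix. We need $\det M\neq 0$; this is Artin/Dedekind-type independence. It can be proved by induction: if the columns of $M$ are dependent, take a minimal relation, apply $\sigma$, and subtract. The coefficients of the relation are then forced into the fixed field $\F_q$, which contradicts $\F_q$-independence of the $\beta_j$. This requires care at the boundary row $\sigma^{k}$. A cleaner route is to extend to the full $2m\times 2m$ matrix $(\sigma^i(\alpha_j))$, which is invertible by independence of the distinct characters $\sigma^0,\ldots,\sigma^{2m-1}$. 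Any $k$ rows of it applied to independent vectors give a nonsingular minor, by the standard Moore-determinant argument with $\sigma$ in place of Frobenius. Once $\det M\neq0$, all $f_i=0$, a contradiction. Hence $\dim\ker f\le k-1$, and the proof is complete.
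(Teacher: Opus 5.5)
The paper gives no proof of this lemma; it cites Kshevetskiy--Gabidulin, so your argument has to stand on its own. Your reduction is the standard one and is correct. A codeword is the evaluation of $f=\sum_{i<k}f_i\sigma^i$ with $\sigma=\mathrm{Frob}_q^s$, and its rank weight is $2m-\dim_{\F_q}\ker f$. Both the distance bound and $\dim_K G_{k,s}(\alpha)=k$ then follow once you know that the $\sigma$-Moore matrix $M=(\sigma^i(\beta_j))_{0\le i\le k-1,\,1\le j\le k}$ is nonsingular for $\F_q$-independent $\beta_1,\ldots,\beta_k$.

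That step, which you call the main obstacle, is not actually established. Your first route stops at the boundary issue, and the ``cleaner route'' fails for three reasons:
\begin{itemize}
\item Invertibility of the full matrix $(\sigma^i(\alpha_j))$ says nothing about a $k\times k$ block built from different vectors $\beta_j$.
\item The claim that any $k$ rows give a nonsingular minor is false. For $m\ge 2$, take rows $0$ and $m$ with $\beta_1=1$ and $\beta_2\in\F_{q^m}\setminus\F_q$. Since $\sigma^m$ fixes $\F_{q^m}$, this gives $\begin{pmatrix}1&\beta_2\\ 1&\beta_2\end{pmatrix}$.
\item The standard Moore-determinant argument for Frobenius counts the roots of a $q$-polynomial of degree $q^{k-1}$. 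That is exactly the degree argument you already noted is unavailable for $s>1$, so invoking it is circular.
\end{itemize}

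The minimal-relation idea does work if you induct on $k$ and set up the relation correctly.
\begin{itemize}
\item By induction the block $M'=(\sigma^i(\beta_j))_{0\le i\le k-2,\,1\le j\le k-1}$ is nonsingular. So the first $k-1$ columns of $M$ are independent.
\item If $\det M=0$, the last column is therefore a combination of them: $\sigma^i(\beta_k)=\sum_{j<k}c_j\sigma^i(\beta_j)$ for $0\le i\le k-1$, with $c_j\in K$.
\item Apply $\sigma$ to the equations with $0\le i\le k-2$ and subtract them from the original equations with $1\le i\le k-1$. This gives $\sum_{j<k}(c_j-\sigma(c_j))\,\sigma^i(\beta_j)=0$ for $1\le i\le k-1$.
\item This is a square system whose coefficient matrix is the entrywise image $\sigma(M')$, which is nonsingular. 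Hence $\sigma(c_j)=c_j$.
\item The fixed field of $\sigma$ is $\F_{q^{\gcd(s,2m)}}=\F_q$, so each $c_j\in\F_q$. The row $i=0$ then gives $\beta_k\in\langle\beta_1,\ldots,\beta_{k-1}\rangle_{\F_q}$, a contradiction.
\end{itemize}
The row $\sigma^k$ never enters. With this lemma supplied, and the implicit assumption $k\le 2m$ stated, your proof is complete.
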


The case $s=1$ recovers the classical Gabidulin code $G_k(\alpha) := G_{k,1}(\alpha)$. 

\begin{remark}\label{rem:gengabi-equiv}
Fix $k$ and $s$ with $\gcd(s, 2m) = 1$. For any two $\F_q$-bases $\alpha, \alpha'$ of $K$ with $\alpha' = \alpha A$, $A \in \GL_{2m}(\F_q)$,   we have $G_{k,s}(\alpha') = G_{k,s}(\alpha) \cdot A$. Hence all generalized Gabidulin codes with fixed parameters $(k, s)$ form a single  rank-metric equivalence class, and the MRD property is independent of $\alpha$. 
\end{remark}

\subsection{Hermitian inner product and Hermitian hull}\label{ssec:hermitian}

The Hermitian involution $\sigma : K \to K$ is defined by $\sigma(x) := x^{q^m}$.   
For $x,y\in K^n$, the \emph{Hermitian inner product} is defined as
\[
\langle x,y\rangle_H := \sum_{i=1}^n x_i\,\sigma(y_i).
\]
The \emph{Hermitian dual} of $C$ is $C^{\perp_H} := \{x\in K^n: \langle x,c\rangle_H = 0\text{ for all }c\in C\}$, and the \emph{Hermitian hull} is $H(C) := C\cap C^{\perp_H}$. A code with $H(C)=\{0\}$ is called \emph{Hermitian LCD}.

For a matrix $G\in K^{k\times n}$, we write $G^\dagger$ for the $n\times k$ matrix with entries $(G^\dagger)_{ij} = \sigma(G_{ji})$. For matrices $A,B$ over $K$, $(AB)^\dagger = B^\dagger A^\dagger$. 

\begin{lemma} \label{lem:rankhullH}  \label{rankhullH}
Let $C$ be an $[n,k]_{q^{2m}/q}$ vector rank-metric code with generator matrix $G$. 
Let $\ker(GG^\dagger) := \{z \in K^{k \times 1} : (GG^\dagger)z = 0\}$.
The map $z \mapsto z^\dagger G$ defines a $\sigma$-semilinear bijection between $\ker(GG^\dagger)$ and $H(C)$. In particular,
\[
\dim_{K} H(C) = \nullity(GG^\dagger) = k - \rank_{K}(GG^\dagger).
\]
\end{lemma}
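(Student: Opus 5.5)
The plan is to verify three things in order: the map $\phi: z \mapsto z^\dagger G$ sends $\ker(GG^\dagger)$ into $H(C)$, it is injective, and it is onto $H(C)$. The dimension formula then follows because a $\sigma$-semilinear bijection between $K$-spaces preserves $K$-dimension. For $z \in K^{k\times 1}$, the row vector $z^\dagger G$ is the combination $\sum_i \sigma(z_i) g_i$ of the rows $g_i$ of $G$, so it lies in $C$. The map is $\sigma$-semilinear: it is additive, and $(\lambda z)^\dagger G = \sigma(\lambda) z^\dagger G$.

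Since $G$ has full row rank $k$, the rows $g_i$ are $K$-linearly independent. Hence $\sum_i \sigma(z_i) g_i = 0$ forces $\sigma(z_i)=0$ for all $i$, so $z = 0$, and $\phi$ is injective on all of $K^{k\times 1}$. Because $\sigma$ is a bijection of $K$, every codeword $c = \sum_i a_i g_i$ equals $\phi(z)$ with $z_i = \sigma^{-1}(a_i)$; note $\sigma^{-1} = \sigma$, since $\sigma^2 = \mathrm{id}$ on $K = \F_{q^{2m}}$. Thus $\phi$ is a semilinear bijection $K^{k\times 1} \to C$. It remains to show that $\phi(z) \in C^{\perp_H}$ if and only if $GG^\dagger z = 0$.

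For this, write the Hermitian inner product in matrix form as $\langle x, y\rangle_H = x\, y^\dagger$ for row vectors $x, y$. A vector $x$ lies in $C^{\perp_H}$ exactly when $x G^\dagger = 0$, because the $j$-th entry of $xG^\dagger$ is $\langle x, g_j\rangle_H$. Taking $x = z^\dagger G$, the condition becomes $z^\dagger G G^\dagger = 0$. Applying $\dagger$ and using $(AB)^\dagger = B^\dagger A^\dagger$ together with $(M^\dagger)^\dagger = M$ (again from $\sigma^2 = \mathrm{id}$), this is equivalent to $G G^\dagger z = 0$. Hence $\phi$ restricts to a bijection from $\ker(GG^\dagger)$ onto $C \cap C^{\perp_H} = H(C)$.

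Finally, $\ker(GG^\dagger)$ is a $K$-subspace, and its image under a $\sigma$-semilinear injection is a $K$-subspace of the same dimension, since $\phi$ carries a basis to a basis. So $\dim_K H(C) = \nullity(GG^\dagger) = k - \rank_K(GG^\dagger)$ by rank–nullity for the $k\times k$ matrix $GG^\dagger$. There is no real obstacle here. The only care needed is the bookkeeping with $\sigma$ versus $\sigma^{-1}$ and the transposition convention; both are harmless because $\sigma$ is an involution.
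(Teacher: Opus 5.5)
Your proof is correct and follows the same route as the paper. You parametrize $C$ by $z \mapsto z^\dagger G$, which is a bijection because $G$ has full row rank, and you reduce membership in $C^{\perp_H}$ to $z^\dagger G G^\dagger = 0$, equivalently $GG^\dagger z = 0$, before concluding by rank--nullity; you just spell out the details the paper leaves implicit, such as $\sigma^2 = \mathrm{id}$, the matrix form of $\langle\cdot,\cdot\rangle_H$, and why semilinear bijections preserve dimension.
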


\begin{proof}
A vector $v \in K^n$ is in $C$ if and only if $v = z^\dagger G$ for some $z \in K^{k\times 1}$, and such a $z$ is unique since $G$ has rank $k$. Such $v$ is in $H(C) = C \cap C^{\perp_H}$ if and only if additionally $vG^\dagger = 0$, equivalently $(GG^\dagger) z = 0$. Hence $z \mapsto z^\dagger G$ defines a $\sigma$-semilinear bijection $\ker(GG^\dagger) \to H(C)$. Since $\sigma$ is an automorphism of $K$, this bijection preserves $K$-dimension, and the dimension formula follows by applying the rank--nullity theorem to $GG^\dagger$.
\end{proof}

\subsection{Scaled trace-self-dual basis}

The trace bilinear form $B_1(x,y) := \Tr_{K/\F_q}(xy)$ is a nondegenerate symmetric $\F_q$-bilinear form on $K$, viewed as an $\F_q$-vector space of dimension $2m$. 
An $\F_q$-basis $\alpha = (\alpha_1, \ldots, \alpha_{2m})$ of $K$ is \emph{(trace-)self-dual} (with respect to $B_1$) if $\Tr_{K/\F_q}(\alpha_i\alpha_j) = \delta_{ij}$ for all $i,j$. 
By \cite{seroussi1980,jungnickel1990}, a self-dual basis of $\F_{q^r}/\F_q$ exists if and only if $q$ is even or both $q$ and $r$ are odd. 
In our setting $r = 2m$ is even, so no self-dual basis exists when $q$ is odd, which motivates the scaled variant that we  introduce next.

For any $\lambda\in K^*$, the \emph{scaled trace bilinear form} is
\[
B_\lambda(x,y) := \Tr_{K/\F_q}(\lambda xy).
\]
Equivalently, $B_\lambda(x,y) = B_1(\mu_\lambda(x), y)$, where $\mu_\lambda : K \to K$, $x \mapsto \lambda x$.
Since $\mu_\lambda$ is an $\F_q$-linear automorphism of $K$ and $B_1$ is nondegenerate, $B_\lambda$ is also a nondegenerate symmetric bilinear form.

\begin{definition}
A \emph{scaled trace-self-dual basis} of $K/\F_q$ is a pair $(\alpha, \lambda)$, where $\alpha = (\alpha_1, \ldots, \alpha_{2m})$ is an $\F_q$-basis of $K$ and $\lambda\in K^*$, such that
\[
\Tr_{K/\F_q}(\lambda\alpha_i\alpha_j) = \delta_{ij} \quad\text{for all } 1\leq i,j\leq 2m.
\]
When $\lambda = 1$, the basis $\alpha$ is an ordinary  self-dual basis.
\end{definition}

\begin{lemma}\label{lem:scaled-basis}
For every prime power $q$ and every positive integer $m$, there exists a scaled trace-self-dual basis of $K = \F_{q^{2m}}$ over $\F_q$.
\end{lemma}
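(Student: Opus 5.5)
The plan is to split on the parity of $q$, using the classification of symmetric bilinear forms over finite fields.

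If $q$ is even, a self-dual basis of $\F_{q^{2m}}/\F_q$ exists by \cite{seroussi1980,jungnickel1990}. We therefore take $\lambda = 1$.

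If $q$ is odd, we use the standard fact that two nondegenerate symmetric bilinear forms of the same dimension over $\F_q$ are isometric if and only if their discriminants agree in $\F_q^*/(\F_q^*)^2$. A form admits an orthonormal basis exactly when it is isometric to the standard dot product on $\F_q^{2m}$, that is, when its discriminant is a square. So it suffices to find $\lambda \in K^*$ such that $\mathrm{disc}(B_\lambda)$ is a square in $\F_q^*$.

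To compute this discriminant, fix any $\F_q$-basis $\beta$ of $K$. The Gram matrix of $B_\lambda$ is $M_\lambda = T\,L_\lambda$, where $T = (\Tr(\beta_i\beta_j))$ is the Gram matrix of $B_1$ and $L_\lambda$ is the matrix of $\mu_\lambda$ in the basis $\beta$. Hence
\[
\det M_\lambda = \det(T)\,\det(\mu_\lambda) = \det(T)\, N(\lambda),
\]
since the determinant of multiplication by $\lambda$ is $N_{K/\F_q}(\lambda)$. The norm $N : K^* \to \F_q^*$ is surjective, a standard fact since $x \mapsto x^{(q^{2m}-1)/(q-1)}$ maps the cyclic group $K^*$ onto its subgroup of order $q-1$. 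We therefore choose $\lambda$ with $N(\lambda) = \det(T)^{-1}$, which gives $\det M_\lambda = 1$, a square.

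It remains to produce the basis. Since $B_\lambda$ is a nondegenerate symmetric form over $\F_q$ ($q$ odd) of square discriminant, it is diagonalizable with square discriminant. Being isometric to the dot product, it has an orthonormal basis $\alpha$, which means $\Tr(\lambda\alpha_i\alpha_j) = \delta_{ij}$.

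The main point to justify is the classification step, that square discriminant implies an orthonormal basis. I would cite it as standard: every nondegenerate quadratic space over $\F_q$ with $q$ odd is isometric to $\langle 1,\dots,1,d\rangle$ with $d$ the discriminant. Alternatively, I would sketch it directly. Diagonalize the form (possible in odd characteristic). Then use that any binary form $\langle a,b\rangle$ represents $1$, because $ax^2 + by^2 = 1$ is solvable over $\F_q$ by counting squares, so two diagonal entries can be converted into $1$ and $ab$. Iterating reduces the form to $\langle 1,\dots,1,d\rangle$ with $d = \det M_\lambda$, and choosing $\lambda$ so that $d = 1$ finishes the proof.
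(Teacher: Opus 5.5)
Your proposal is correct and follows essentially the same route as the paper. For $q$ even you take an ordinary self-dual basis with $\lambda=1$; for $q$ odd you choose $\lambda$ with $N_{K/\F_q}(\lambda)=\det(M_1)^{-1}$, so that the Gram matrix of $B_\lambda$ has determinant $1$, and then invoke the classification of nondegenerate symmetric forms over $\F_q$ (the paper cites \cite[Lemma 1]{jungnickel1990}) to obtain an orthonormal basis. Your extra justifications are sound and make the argument more self-contained: the surjectivity of the norm, which the paper uses tacitly, and the sketch showing that square discriminant forces isometry with the dot product.
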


\begin{proof} 
The extension degree of $K/\F_q$ is $2m$, which is always even. 
If $q$ is even, then by~\cite{wan2003} (see also~\cite{jungnickel1990}), an ordinary  self-dual basis of $K/\F_q$ exists, corresponding to $\lambda=1$. In the remainder of the proof, we consider the case $q$ is odd.
In this case, no ordinary trace-self-dual basis of $K/\F_q$ exists, so we cannot take $\lambda = 1$. 

Fix an arbitrary $\F_q$-basis $\beta = (\beta_1, \ldots, \beta_{2m})$ of $K$, and let $M_1 := (\Tr(\beta_i\beta_j))_{ij}\in\F_q^{2m\times 2m}$ be the Gram matrix of the form $B_1$ in this basis. Since $B_1$ is nondegenerate, $\Delta := \det(M_1)\in\F_q^*$.

For $\lambda\in K^*$, let $L_\lambda$ be the matrix representation of the map $\mu_\la$ in basis $\beta$, see \cite[Example 13.2.10]{mullen2013}.  
The identity $B_\lambda(x,y) = B_1(\mu_\lambda(x), y)$ gives 
\[
M_\lambda := (\Tr(\lambda\beta_i\beta_j))_{ij} = L_\lambda^\top \, M_1.
\]
Since $\det(L_\lambda) = N_{K/\F_q}(\lambda)$ (see for example \cite[Exercises 2.25 and 2.26]{lidl1997}), we have
\[
\det(M_\lambda) = N_{K/\F_q}(\lambda)\Delta.
\] 
Fix $\lambda$ with $N_{K/\F_q}(\lambda)=\Delta^{-1}$, so that $\det(M_\lambda)=1$. 
By \cite[Lemma 1]{jungnickel1990} (see also \cite[p. 143]{artin1957}),  the form $B_\lambda$ is equivalent to the standard form $\sum_i x_iy_i$. 
In particular, $B_\lambda$ admits an orthonormal basis and so there exists a basis $\alpha = (\alpha_1,\ldots,\alpha_{2m})$ of $K$ over $\F_q$ such that $B_\lambda(\alpha_i, \alpha_j) = \Tr(\lambda\alpha_i\alpha_j) = \delta_{ij}$.
\end{proof}

\begin{remark}
The scaled trace-self-dual basis is closely related to the notion of \emph{weakly self-dual basis} studied by Geiselmann and Gollmann~\cite{geiselmann1993}. A basis $\alpha$ is weakly self-dual if its trace-dual basis $\{\beta_i\}$ (with $\Tr(\alpha_i\beta_j) = \delta_{ij}$) satisfies $\beta_i = \lambda\alpha_{\pi(i)}$ for some $\lambda\in K^*$ and permutation $\pi$. Lemma~\ref{lem:scaled-basis} corresponds to the special case $\pi = \mathrm{id}$. The existence problem of weakly self-dual bases does not seem to be addressed in the  literature. For the case  $\pi = \mathrm{id}$, we have provided a self-contained proof above.
\end{remark}

\section{Hermitian hull-variation of equivalent vector rank-metric codes}\label{sec:hullvariation}
The main results of this section are  Theorems~\ref{thm:rankone-h} and  \ref{thm:binary-constant-hull}, which together state that, except when $(q, n) = (2, 2)$,
for every vector rank-metric code $C$ with $\dim_K H(C) = h\geq 1$
there exists an equivalent code $C'$ with $\dim_K H(C') = h - 1$. 
The proof relies on   Lemmas~\ref{lem:rankone} and \ref{lem:descent-matrix} for the case  $n\geq 3$.

The case $n = 2$ is treated separately. For $q\geq 4$, the
construction in the proof of Lemma~\ref{lem:descent-matrix} does not use the condition $n \ge 3$ and  still applies. 
The case $(q, n) = (3, 2)$ requires a different argument, 
which we provide in Lemma~\ref{lem:descent-n2-q3}. 
For the remaining case $(q, n) = (2, 2)$, we   show in Remark~\ref{rem:obstruction-22} that hull-variation is not always possible.

Theorems~\ref{thm:rankone-h} and  \ref{thm:binary-constant-hull} allow us to obtain the full hull-variation classification, which we state in Theorem~\ref{thm:hermitianmain}: for every pair
$(q, n)\neq(2, 2)$, every intermediate hull dimension
$\ell\in\{0, 1, \ldots, h\}$ is attained by some equivalent code. 
In particular, we obtain Corollary~\ref{cor:hermitianLCD} showing that every vector rank-metric code with $(q,n) \ne (2,2)$ is equivalent to a Hermitian LCD code.
 
\begin{lemma} \label{lem:rankone}
Let $B\in K^{k\times k}$ be a Hermitian matrix of nullity $h\geq 1$. Let $a\in K^{k\times 1}$ and $\lambda\in\F_q^*$. If there exists $z\in\ker(B)$ with $z^\dagger a\neq 0$, then $B + \lambda aa^\dagger$ has nullity $h-1$.
\end{lemma}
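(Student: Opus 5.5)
We compare the kernels of $B$ and $B' := B + \lambda aa^\dagger$ directly. The argument uses two facts. First, $\lambda\in\F_q$ is fixed by $\sigma$, because $\lambda^{q^m}=\lambda$. Second, $B$ is Hermitian, so $B^\dagger=B$. Together these show that $B'$ is again Hermitian. Throughout we also use the identity $(z^\dagger a)^\dagger = a^\dagger z$, which gives $\sigma(z^\dagger a)=a^\dagger z$ for the scalar $z^\dagger a$.

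\emph{Step 1: the kernel of $B'$ lies inside the kernel of $B$.}
Let $w\in\ker(B')$, so $Bw = -\lambda a(a^\dagger w)$.
Take the vector $z\in\ker(B)$ from the hypothesis. Multiplying on the left by $z^\dagger$ gives $z^\dagger B w = -\lambda (z^\dagger a)(a^\dagger w)$.
The left side vanishes, since $z^\dagger B = (B^\dagger z)^\dagger = (Bz)^\dagger = 0$.
We have $z^\dagger a\neq 0$ and $\lambda\neq 0$, so $a^\dagger w=0$. Substituting back gives $Bw=0$.
Hence $\ker(B')=\{w\in\ker(B): a^\dagger w = 0\}$. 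Conversely, every such $w$ clearly satisfies $B'w=Bw+\lambda a(a^\dagger w)=0$, so this is an equality of sets.

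\emph{Step 2: count dimensions.}
The map $\phi:\ker(B)\to K$ given by $w\mapsto a^\dagger w$ is $K$-linear. It is nonzero, because $\phi(z)=a^\dagger z=\sigma(z^\dagger a)\neq 0$. A nonzero linear functional has kernel of codimension one, so $\ker(B')=\ker\phi$ has dimension $h-1$. This is the claimed nullity of $B'$.

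The only delicate point is in Step 1: turning the hypothesis $z^\dagger a\ne 0$ into the vanishing of $a^\dagger w$. This is exactly where the Hermitian property of $B$ is needed, to get $z^\dagger B=0$, and where $\lambda\neq 0$ is needed. Everything else is routine linear algebra.
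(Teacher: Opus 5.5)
Your proof is correct and follows essentially the same route as the paper. Both arguments show $\ker(B+\lambda aa^\dagger)=\ker(B)\cap\{x : a^\dagger x=0\}$ by left-multiplying with $z^\dagger$ and using $z^\dagger B=0$. They then count dimensions: the paper uses the intersection formula with a hyperplane not containing $z$, and you use the nonzero functional $w\mapsto a^\dagger w$ on $\ker(B)$. Your identity $a^\dagger z=\sigma(z^\dagger a)\neq 0$ makes explicit the step the paper leaves implicit when it asserts $z$ is not in the hyperplane. Your observation that $B+\lambda aa^\dagger$ is Hermitian is true but not needed.
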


\begin{proof}
Since $B$ is Hermitian and $z\in\ker B$, we have $z^\dagger B = (Bz)^\dagger = 0$. Assume that $(B+\lambda aa^\dagger)x = 0$ for some $x\in K^{k\times 1}$. Multiplying on the left by $z^\dagger$, we obtain
\[
0 = z^\dagger(B + \lambda aa^\dagger)x = \lambda(z^\dagger a)(a^\dagger x).
\]
Since $\lambda\neq 0$ and $z^\dagger a\neq 0$, it follows that $a^\dagger x = 0$, and so $Bx = 0$. Conversely, any $x\in\ker(B)$ with $a^\dagger x = 0$ lies in $\ker(B + \lambda aa^\dagger)$. Hence,
\[
\ker(B + \lambda aa^\dagger) = \ker(B)\cap\{x: a^\dagger x = 0\}.
\]
Note that the hyperplane $H:=\{a^\dagger x = 0\}$ does not contain $z$. From the dimension formula, we have
\[
\dim(\ker(B) \cap H) \;=\; \dim\ker(B) + \dim H - \dim(\ker(B) + H) \;=\; h + (k-1) - k \;=\; h-1.
\]
Therefore, $B + \lambda aa^\dagger$ has nullity $h - 1$.
\end{proof}



\begin{lemma}\label{lem:descent-matrix}
Let $q$ be a prime power and $n \ge 3$.  Let $w\in K^n$ be nonzero. If $q = 2$,  further assume that $w\notin K\mathbf 1_n$. 
Then there exist $u\in\F_q^n$ and $M\in\GL_n(\F_q)$ such that 
\begin{enumerate}
\item $wu^\top \neq 0$,
\item $MM^\top = I_n + \lambda u^\top u$ for some $\lambda\in\F_q^*$.
\end{enumerate}
\end{lemma}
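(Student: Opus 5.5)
We will take $u$ to be a vector of very small support, chosen coordinate-wise from $w$, so that $I_n+\lambda u^\top u$ is a nondegenerate symmetric matrix. We then argue that it is congruent over $\F_q$ to $I_n$, i.e.\ $I_n+\lambda u^\top u = M I_n M^\top$ for some $M\in\GL_n(\F_q)$. For the congruence we use the classification of nondegenerate symmetric bilinear forms over finite fields, as in \cite[Lemma 1]{jungnickel1990} used in Lemma~\ref{lem:scaled-basis}. For $q$ odd, a form is equivalent to $\sum_i x_iy_i$ iff its determinant is a square. For $q$ even, it is equivalent to the standard form iff it is non-alternating, i.e.\ some diagonal entry is nonzero. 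Note that $\det(I_n+\lambda u^\top u)=1+\lambda\, uu^\top$. We split into three cases according to $q$.

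\emph{Case $q\ge4$.} Pick $i$ with $w_i\neq0$ and put $u=e_i$, so $wu^\top=w_i\neq0$. Choose $c\in\F_q^*$ with $c^2\neq1$. For $q$ even every element is a square, so any $c\notin\{0,1\}$ works. For $q\ge5$ odd there are $(q-1)/2\ge2$ nonzero squares, so some nonzero square other than $1$ exists and we take $c$ to be one of its square roots. Set $\lambda=c^2-1\neq0$ and $M=\mathrm{diag}(1,\dots,c,\dots,1)$ with $c$ in position $i$. Then $MM^\top=I_n+\lambda e_i^\top e_i$ directly, with no congruence argument needed. This case does not use $n\ge3$.

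\emph{Case $q=3$.} Here weight-one $u$ fails, since $1+\lambda\in\{0,2\}$ and $2$ is a non-square. Instead we take $u$ isotropic, i.e.\ $uu^\top=0$ (for example of weight $3$). Then $\det(I_n+\lambda u^\top u)=1$, a square, so this symmetric matrix is congruent to $I_n$ and the required $M$ exists for either $\lambda\in\{1,2\}$. To obtain $wu^\top\neq0$, it suffices that the isotropic vectors span $\F_3^n$; this is where $n\ge3$ is used. On any three coordinates, $(1,1,1)-(1,1,-1)=(0,0,2)$, so every $e_i$ lies in the span of isotropic vectors. Since $w\neq0$, some isotropic $u$ satisfies $wu^\top\neq0$.

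\emph{Case $q=2$.} Here $\lambda=1$. Because $w\notin K\mathbf 1_n$, there are indices $i\neq j$ with $w_i\neq w_j$. Take $u=e_i+e_j$, so $wu^\top=w_i+w_j\neq0$. Then $\det(I_n+u^\top u)=1+\mathrm{wt}(u)=1$, so the form is nondegenerate. Since $n\ge3$, some coordinate $k\notin\{i,j\}$ has diagonal entry $1$, so the form is non-alternating. Hence it is congruent to $I_n$, which gives $M$. If a self-contained argument is preferred, one can write down $M$ explicitly on the three coordinates $i,j,k$ and take the identity elsewhere.

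The main obstacle is the small-field cases $q\in\{2,3\}$: there the diagonal scaling trick is unavailable, and one must both invoke (or verify) the correct congruence criterion and ensure the chosen $u$ is compatible with $w$. The hypotheses $n\ge3$ and, for $q=2$, $w\notin K\mathbf 1_n$ enter exactly at this point.
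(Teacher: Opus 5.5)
Your proof is correct, and your choice of $u$ agrees with the paper's in every case: $u=e_i$ with a diagonal $M$ for $q\ge 4$; an isotropic vector $\pm e_i+e_j+e_l$ for $q=3$, with the sign fixed by the same difference argument; and $u=e_i+e_j$ for $q=2$. You differ in how you produce $M$ for $q\in\{2,3\}$. You appeal to the classification of nondegenerate symmetric bilinear forms over $\F_q$. The paper instead writes $M$ down: $M=I_n+2u^\top u$ for $q=3$, verified in one line from $(u^\top u)^2=0$ and giving $\lambda=1$. For $q=2$ it uses an explicit block $L\in\GL_3(\F_2)$ with $LL^\top=I_3+\widetilde u^\top\widetilde u$, moved to the coordinates $l,i,j$ by a permutation matrix; this is exactly the ``self-contained'' option you mention at the end. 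The paper's route is elementary and constructive, which makes Theorem~\ref{thm:rankone-h} an explicit procedure (cf.\ Example~\ref{ex:rankone-descent-h1}). Yours is an existence argument. For even $q$ it also rests on a different classical result than the discriminant criterion of \cite[Lemma 1]{jungnickel1990}, namely that a nonsingular non-alternating symmetric matrix in characteristic $2$ factors as $BB^\top$ (cf.\ \cite{seroussi1980}), and you should cite it as such. In return, your viewpoint explains the hypotheses. Since $\det(MM^\top)=(\det M)^2$, one needs $1+\lambda uu^\top$ to be a nonzero square; for $q$ even one also needs $I_n+\lambda u^\top u$ to be non-alternating, because an invertible $M$ cannot have all row sums zero. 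This shows at once why weight-one $u$ fails for $q=3$. It also shows why no rank-one update can work for $(q,n)=(3,2)$, since $\F_3^2$ has no nonzero isotropic vectors, or for $(2,2)$, since the only admissible $u=(1,1)$ gives an alternating matrix. Both observations are consistent with the separate treatment in Lemma~\ref{lem:descent-n2-q3} and Remark~\ref{rem:obstruction-22}.
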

\begin{proof}
We split into cases depending on $q$.

\paragraph{Case $q\geq 4$.} Choose $i\in\operatorname{supp}(w)$. Let  $u := e_i$, so that $wu^\top= w_i \ne 0$. Let $\mu\in\F_q^*$ with $\mu^2\neq 1$. Such $\mu$ exists since $q \ge 4$. Let $\lambda := \mu^2 - 1\in\F_q^*$. Let $M\in\GL_n(\F_q)$ be the diagonal matrix $\mathrm{diag}(1,\ldots,1,\mu,1,\ldots,1)$ with $\mu$ in the $i$-th position. Then $MM^\top = I_n + \lambda e_i^\top e_i$.

\bigskip

For the cases $q=2$ and $q=3$, there is no $\mu\in\F_q^*$ with $\mu^2\neq 1$, so we will describe the choices of $u$ and $M$ differently from the case $q \ge 4$. 

\paragraph{Case $q = 3$.} Choose $i\in\mathrm{supp}(w)$, and pick any $j,l\in\{1,\ldots,n\}\setminus\{i\}$ with $j\neq l$, which is possible since $n\geq 3$. Consider the two vectors
\[
u^{(1)} = e_i + e_j + e_l, \qquad u^{(2)} = -e_i + e_j + e_l.
\]
We have $u^{(t)}(u^{(t)})^\top = 0$   for $t = 1, 2$. Also,
\[
w(u^{(1)})^\top - w(u^{(2)})^\top = 2w_i \neq 0.
\]
 Hence $w(u^{(t)})^\top \neq 0$ for some $t\in\{1, 2\}$. We fix such a $t$ and let $u := u^{(t)}$. Then (i) holds.

Let  $M := I_n + 2u^\top u$. We note that $M$ is symmetric. We have $\det(M) = 1 + 2uu^\top = 1\neq 0$, so $M\in\GL_n(\F_3)$. Furthermore, since $uu^\top = 0$, we have $(u^\top u)^2 = u^\top(uu^\top)u = 0$. Then 
\[
MM^\top = M^2 = I_n + 4u^\top u + 4(u^\top u)^2 = I_n + u^\top u,
\]
and so (ii) holds. 

\paragraph{Case $q = 2$.} Since $w\notin K\mathbf 1_n$, there exist coordinates $i,j$ satisfying $w_i\neq w_j$. Let $u := e_i + e_j\in\F_2^n$. Then $wu^\top = w_i + w_j\neq 0$ and so (i) holds.

Let
\[
L :=
\begin{bmatrix}
1 & 1 & 1 \\
0 & 1 & 1 \\
1 & 0 & 1
\end{bmatrix}\in\GL_3(\F_2).
\]
A direct computation gives
\[
LL^\top
=
\begin{bmatrix}
1&0&0\\
0&0&1\\
0&1&0
\end{bmatrix}
=
I_3+\widetilde u^\top\widetilde u,
\qquad
\widetilde u:=(0,1,1)\in\F_2^3.
\]
Now let
\[
\widehat u:=(0,1,1,0,\ldots,0)\in\F_2^n, \qquad
D:=\operatorname{diag}(L,I_{n-3})\in\GL_n(\F_2).
\]
Since \(n\geq 3\), we can choose \(l\in\{1,\ldots,n\}\setminus\{i,j\}\).  Let $\pi\in S_n$ be any permutation with $\pi(1) = l$, $\pi(2) = i$, $\pi(3) = j$, and let $P$ be the corresponding permutation matrix, so $Pe_r^\top = e_{\pi(r)}^\top$ for all $r$. Then $P\widehat u^\top = e_i^\top + e_j^\top = u^\top$.
Define
\[
M:=PDP^\top\in\GL_n(\F_2).
\]
 Then
\[
MM^\top = PDD^\top P^\top = P(I_n + \widehat u^\top\widehat u)P^\top = I_n + u^\top u,
\]
using $PP^\top = I_n$ and $P\widehat u^\top = u^\top$, so (ii) holds.
 \end{proof}


\begin{lemma}\label{lem:descent-n2-q3}
Let $K = \F_{9^m}$. Let $C\subseteq K^2$ be a vector rank-metric code with
  $\dim_K H(C) = 1$. Then there exists $M\in\GL_2(\F_3)$ such that
  $CM$ is Hermitian LCD.
\end{lemma}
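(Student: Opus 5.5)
Since $n=2$, the first step is to pin down $C$. If $k=2$ then $C=K^2$ and $C^{\perp_H}=\{0\}$, which contradicts $\dim_K H(C)=1$. So $k=1$, and we write $C=Kw$ with $w=(w_1,w_2)\neq 0$. For a one-dimensional code, Lemma~\ref{lem:rankhullH} says that $\dim_K H(C)=1$ exactly when the $1\times 1$ matrix $GG^\dagger=w_1\sigma(w_1)+w_2\sigma(w_2)$ vanishes. Likewise $CM$ is Hermitian LCD exactly when the corresponding scalar for $wM$ is nonzero. The whole task is therefore to find $M\in\GL_2(\F_3)$ with $\langle wM,wM\rangle_H\neq 0$.

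Next we normalize. Write $\nu(x):=x\sigma(x)\in F=\F_{3^m}$ for the norm of $K/F$, which vanishes only at $x=0$. From $\nu(w_1)+\nu(w_2)=0$ and $w\neq 0$, both $w_1$ and $w_2$ are nonzero. Put $t:=w_2/w_1$, so $\nu(t)=-1$, and set $T:=t+\sigma(t)\in F$. Scaling by $w_1$ does not affect whether the value vanishes, so we may take $w=(1,t)$. For $M=\begin{pmatrix}a&b\\ c&d\end{pmatrix}$ with entries in $\F_3$ (which are fixed by $\sigma$), we have $wM=(a+ct,\,b+dt)$. Expanding $\nu(a+ct)=a^2+acT+c^2\nu(t)$ and using $\nu(t)=-1$ gives
\[
\langle wM,wM\rangle_H = a^2+b^2-c^2-d^2+(ac+bd)\,T .
\]

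Finally we test two explicit matrices of determinant $1$. Taking $M_1=\begin{pmatrix}1&1\\0&1\end{pmatrix}$ gives the value $1+T$. Taking $M_2=\begin{pmatrix}1&0\\1&1\end{pmatrix}$ gives $1-1-1+T=T-1$. These two values differ by $2\neq 0$ in characteristic $3$, so at least one is nonzero. The corresponding $M_i$ is the required matrix, since then $H(CM_i)=\{0\}$ by Lemma~\ref{lem:rankhullH}.

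The main obstacle is simply to find a quadratic expression in $(a,b,c,d)$ that $T$ cannot kill for both choices. This is the same "difference is $2w_i\neq0$" trick used in the case $q=3$ of Lemma~\ref{lem:descent-matrix}. I expect the computation above to handle it directly, with no dependence on $m$.
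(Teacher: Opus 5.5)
Your proof is correct and takes essentially the paper's approach: you reduce to a one-dimensional code $Kv$ with $\langle v,v\rangle_H=0$, so both coordinates are nonzero, and then apply a shear in $\GL_2(\F_3)$ so that $\langle vM,vM\rangle_H\neq 0$. The only difference is cosmetic: the paper uses the lower shears $M_t$, $t\in\F_3$, and notes that $P(t)=t\bigl(x\sigma(y)+y\sigma(x)\bigr)+t^2y\sigma(y)$ has at most two roots, whereas you test one upper and one lower shear whose values $1+T$ and $T-1$ differ by $2\neq 0$.
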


\begin{proof}
Since \(C\subseteq K^2\) has nonzero Hermitian hull, we must have
\(\dim_K C=1\) and \(H(C)=C\). Let $C=Kv$
for some nonzero vector
$v=(x,y)\in K^2.$
Since \(H(C)=C\), we have   $\langle v,v\rangle_H=0$, and so
\[
x\sigma(x)+y\sigma(y)=0.
\]
In particular, \(x,y\neq 0\).
For \(t\in\F_3\), let
\[
M_t=
\begin{bmatrix}
1&0\\
t&1
\end{bmatrix}
\in\GL_2(\F_3).
\]
Then $ vM_t=(x+ty,y) $, and
\[
\begin{aligned}
\langle vM_t,vM_t\rangle_H
&=(x+ty)\sigma(x+ty)+y\sigma(y)\\
&=x\sigma(x)+y\sigma(y)
+t\bigl(x\sigma(y)+y\sigma(x)\bigr)
+t^2y\sigma(y)\\
&=
t\bigl(x\sigma(y)+y\sigma(x)\bigr)
+t^2y\sigma(y).
\end{aligned}
\]
Furthermore, since $y\sigma(y)\neq 0$, the polynomial
$P(t):=\langle vM_t,vM_t\rangle_H$
is a quadratic. Since \(\F_3\) has three elements, there exists \(t\in\F_3\) such that
$P(t)\neq 0,$ and so with the choice $M:=M_t$, we have $\langle vM,vM\rangle_H \ne 0$. Then the code $C'=CM$ is 1-dimensional with 
$H(C') \ne C'$, so it is Hermitian LCD.
\end{proof}

\begin{theorem} \label{thm:rankone-h}
Let $q$ be a prime power and let $C$ be an
$[n,k]_{q^{2m}/q}$ vector rank-metric code with $\dim_K H(C) = h\geq 1$.
Assume that $(q,n)\neq(2,2)$, and that $H(C)\neq K\mathbf 1_n$ when $q=2$. 
Then there exists an equivalent code $C'$ with $\dim_K H(C') = h - 1$.
\end{theorem}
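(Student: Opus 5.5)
Let $G$ be a generator matrix of $C$ and put $B := GG^\dagger \in K^{k\times k}$. Since $(GG^\dagger)^\dagger = GG^\dagger$, $B$ is Hermitian, and by Lemma~\ref{lem:rankhullH} it has nullity $h\geq 1$. For $M\in\GL_n(\F_q)$, the code $C' := CM$ has generator matrix $GM$. Because $M$ has entries in $\F_q$ and $\sigma$ fixes $\F_q$, we have $M^\dagger = M^\top$. Hence the Gram matrix of $C'$ is
\[
(GM)(GM)^\dagger = G\,MM^\top G^\dagger .
\]
The plan is to choose $M$ with $MM^\top = I_n + \lambda u^\top u$ for some $u\in\F_q^n$ and $\lambda\in\F_q^*$. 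Setting $a := Gu^\top\in K^{k\times 1}$, we get $a^\dagger = \sigma(u) G^\dagger = uG^\dagger$, because $u$ has entries in $\F_q$. Therefore
\[
G\,MM^\top G^\dagger = B + \lambda\, aa^\dagger .
\]
By Lemma~\ref{lem:rankone}, this matrix has nullity $h-1$ as soon as some $z\in\ker(B)$ satisfies $z^\dagger a = z^\dagger G u^\top \neq 0$. Lemma~\ref{lem:rankhullH} then gives $\dim_K H(C') = h-1$.

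By Lemma~\ref{lem:rankhullH}, $w := z^\dagger G$ ranges over the nonzero elements of $H(C)$ as $z$ ranges over the nonzero elements of $\ker B$. So it suffices to find a nonzero $w\in H(C)$ together with $u$ and $M$ as in Lemma~\ref{lem:descent-matrix}, namely $wu^\top\neq 0$ and $MM^\top = I_n+\lambda u^\top u$. I would organise the argument by cases as follows.

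For $n\ge 3$ and $q\ne 2$, any nonzero $w\in H(C)$ works, and Lemma~\ref{lem:descent-matrix} supplies $u$ and $M$.

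For $n\ge3$ and $q=2$, we need some $w\in H(C)\setminus K\mathbf 1_n$. If $h\ge 2$, then $H(C)$ is not contained in the line $K\mathbf 1_n$, so such a $w$ exists. If $h=1$, the hypothesis $H(C)\neq K\mathbf 1_n$ gives $H(C)\not\subseteq K\mathbf 1_n$, so again such a $w$ exists.

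For $n=2$ and $q\ge 4$, the $q\ge4$ case of the proof of Lemma~\ref{lem:descent-matrix} never uses $n\ge 3$: it takes $u=e_i$ with $i$ in the support of $w$ and $M$ diagonal. The same argument therefore applies.

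For $n=2$ and $q=3$, a nonzero Hermitian hull forces $k=1$ and $h=1$. Lemma~\ref{lem:descent-n2-q3} then gives $M$ with $CM$ Hermitian LCD, that is, with hull dimension $0=h-1$.

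Finally, $C'=CM$ is equivalent to $C$ by definition, since $GM$ is a generator matrix of $C'$ and $M\in\GL_n(\F_q)$. The main point needing care is the identity $(GM)(GM)^\dagger = B+\lambda aa^\dagger$, which relies on $M$ and $u$ being fixed by $\sigma$. The translation of the condition $z^\dagger a\neq0$ into $wu^\top\ne0$ for a hull vector $w$ is what matches Lemma~\ref{lem:rankone} with Lemma~\ref{lem:descent-matrix}. Once these two are verified, the case split above completes the proof.
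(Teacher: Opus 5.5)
Your proposal is correct and matches the paper's proof step for step. It uses the same rank-one update $GMM^\top G^\dagger = GG^\dagger+\lambda aa^\dagger$ with $a=Gu^\top$, the same identification $z^\dagger a = wu^\top$ linking Lemma~\ref{lem:rankone} to Lemma~\ref{lem:descent-matrix}, and the same case split: $n\ge 3$; $n=2$ with $q\ge 4$; and $n=2$ with $q=3$ via Lemma~\ref{lem:descent-n2-q3}. Your explicit check that some $w\in H(C)\setminus K\mathbf 1_n$ exists when $q=2$ (from $h\ge 2$, or from $H(C)\neq K\mathbf 1_n$ when $h=1$) fills in a step the paper leaves implicit.
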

\begin{proof} Let $G$ be a generator matrix of $C$.
Let $w\in H(C)$ be nonzero. If $q = 2$,  we   choose $w$ such  that $w\notin K\mathbf 1_n$. 
By Lemma~\ref{lem:rankhullH}, there exists
$z\in\ker(GG^\dagger)$ with $z^\dagger G = w$.

\paragraph{Case $n\geq 3$.} By Lemma~\ref{lem:descent-matrix}, there exist
$u\in\F_q^n$ and $M\in\GL_n(\F_q)$ with $wu^\top\neq 0$ and
$MM^\top = I_n + \lambda u^\top u$ for some $\lambda\in\F_q^*$. Set
$a := Gu^\top$ and $G' := GM$, so
\[
G'(G')^\dagger
= GMM^\top G^\dagger
= GG^\dagger + \lambda a a^\dagger.
\]
Let $C'$ be the code with generator matrix $G'$.
Since $z\in\ker(GG^\dagger)$ and $z^\dagger a = wu^\top\neq 0$,
by Lemma~\ref{lem:rankone}, we have $\dim_K H(C') = h - 1$.

\paragraph{Case $n = 2$, $q\geq 4$.} 
We follow the case $q \ge 4$ in the proof of
Lemma~\ref{lem:descent-matrix}. 
Choose $i\in\operatorname{supp}(w)$, take $u := e_i$ and let $M$ be the
$2\times 2$ diagonal matrix with $\mu\in\F_q^*$ ($\mu^2\neq 1$) in
position $i$ and $1$ in the other position. 
Then $wu^\top = w_i\neq 0$ and $MM^\top = I_2 + (\mu^2 - 1)u^\top u$. 
Similar to the previous case, the code $C':=CM$ is equivalent to $C$ with $\dim_K H(C') = h - 1$.

\paragraph{Case $n = 2$, $q = 3$.} 
Since $h \geq 1$ and $k \leq n - 1 = 1$, we have $h = k = 1$. By
Lemma~\ref{lem:descent-n2-q3}, there exists $M \in \GL_2(\F_3)$ such that
$C':=CM$ is Hermitian LCD.
\end{proof}

\begin{theorem}
\label{thm:binary-constant-hull}
Let \(K=\F_{2^{2m}}\)  and let \(C\) be an
\([n,k]_{2^{2m}/2}\) vector rank-metric code with \(n\geq 3\). 
Assume that $H(C)=K\mathbf{1}_n$.
Then  there exists an equivalent code $C'$ such that $\dim_K H(C')=0.$
\end{theorem}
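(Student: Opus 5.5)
Throughout, $h=\dim_K H(C)=1$, and we write $w:=\mathbf 1_n$, which lies in $H(C)$. Since $\langle \mathbf 1_n,\mathbf 1_n\rangle_H = n\cdot 1$ must vanish, $n$ is even, so $n\ge 4$.

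\textbf{Why Lemma~\ref{lem:descent-matrix} does not apply directly.} We cannot run the argument of Theorem~\ref{thm:rankone-h} on $w$ itself. Condition (i) of Lemma~\ref{lem:descent-matrix} would need $u\in\F_2^n$ with $\mathbf 1_n u^\top\neq 0$, i.e.\ $u$ of odd weight. But then $\det(I_n+u^\top u)=1+uu^\top=0$, so no invertible $M$ satisfies $MM^\top=I_n+u^\top u$. The plan is therefore two-step: first move $C$, within its equivalence class, to a code whose hull is nonzero but different from $K\mathbf 1_n$. Then apply Theorem~\ref{thm:rankone-h} (possibly repeatedly) to descend to hull dimension $0$.

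\textbf{Step 1: an even-weight rank-one update.} Let $G$ be a generator matrix, $B:=GG^\dagger$, and $z$ with $\ker B=Kz$ and $z^\dagger G=\mathbf 1_n$. Pick $i\neq j$, let $u=e_i+e_j$, and take $M=PDP^\top$ as in the case $q=2$ of Lemma~\ref{lem:descent-matrix}, so that $MM^\top=I_n+u^\top u$. Put $a:=Gu^\top$, $G':=GM$ and $C':=CM$, so that
\[
G'(G')^\dagger=B+aa^\dagger .
\]
Now $z^\dagger a=\mathbf 1_n u^\top=0$, so $a\in (\ker B)^{\perp}=\mathrm{im}\,B$ (using that $B$ is Hermitian). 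Write $a=By$. A direct kernel computation, writing $x=\alpha z-cy$ with $c=y^\dagger Bx$, gives $c\,(1+y^\dagger By)=0$. This yields two cases.
\begin{itemize}
\item If $1+y^\dagger By\neq 0$, then $\ker(B+aa^\dagger)=Kz$. Hence, by Lemma~\ref{lem:rankhullH}, $H(C')=K\,(z^\dagger GM)=K\,\mathbf 1_nM$.
\item Otherwise the nullity becomes $2$.
\end{itemize}
In the first case it remains to check that $\mathbf 1_nM\notin K\mathbf 1_n$. Since $M$ has entries in $\F_2$, this means $\mathbf 1_nM\neq \mathbf 1_n$, i.e.\ some column of $M$ has even weight. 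For $D=\mathrm{diag}(L,I_{n-3})$ the column sums of $L$ are $(0,1,1)$, so $\mathbf 1_n M\neq\mathbf 1_n$ automatically. Then $H(C')\neq K\mathbf 1_n$, and Theorem~\ref{thm:rankone-h} gives an equivalent LCD code.

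\textbf{Step 2: the nullity-$2$ case, the main obstacle.} If $1+y^\dagger By=0$, then $\dim H(C')=2$, so $H(C')\neq K\mathbf 1_n$ automatically. Theorem~\ref{thm:rankone-h} then gives an equivalent code $C''$ with hull dimension $1$. The danger is that $H(C'')=K\mathbf 1_n$ again, which would create a loop. I would try to rule this out in one of two ways.
\begin{itemize}
\item Vary the pair $(i,j)$, and the auxiliary index $l$, among the available choices. Since $n\ge 4$ there are several; show that $y^\dagger By$ cannot equal $1$ for all of them, by expanding $y^\dagger By=a^\dagger y$ as a function of $u$.
\item Failing that, choose the $w\in H(C')\setminus K\mathbf 1_n$ used in Theorem~\ref{thm:rankone-h} so that the resulting $1$-dimensional hull is $\ker(B')\cap\{a'^\dagger x=0\}$, which can be steered away from the preimage of $\mathbf 1_n$ by the freedom in choosing $w$.
\end{itemize}
Either way, once a hull of dimension $1$ not equal to $K\mathbf 1_n$ is reached, a final application of Theorem~\ref{thm:rankone-h} yields $\dim_K H=0$.
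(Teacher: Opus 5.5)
Your Step~1 is sound. The kernel computation giving $c\,(1+y^\dagger By)=0$ is correct, and in the first case $H(CM)=K\,\mathbf 1_nM\neq K\mathbf 1_n$. (Small slip: the column sums of $L$ are $(0,0,1)$, not $(0,1,1)$; this does not affect the conclusion.)

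The gap is Step~2, which consists of two unproved hopes, and the first is false. Take $m=1$, $K=\F_4=\F_2(\omega)$, $n=4$ and
\[
G=\begin{bmatrix}1&1&1&1\\0&1&\omega&\omega^2\end{bmatrix},
\]
so that $GG^\dagger=\mathrm{diag}(0,1)$ and $H(C)=K\mathbf 1_4$. For every pair $i\neq j$ you get $a=Gu^\top=(0,\,g_i+g_j)^\top$ with $g=(0,1,\omega,\omega^2)$. Since $g_i+g_j\in\F_4^*$ satisfies $x\sigma(x)=x^3=1$, you get $GG^\dagger+aa^\dagger=0$. So every choice of $(i,j)$ lands in the nullity-$2$ case, and the choice of $l$ cannot help because it does not affect $MM^\top$.

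The second fallback is not automatic either. With $u=e_1+e_2$ and $l=3$ you get $CM=\langle(0,1,0,1),(\omega,\omega^2,\omega^2,\omega^2)\rangle$, which is Hermitian self-dual. Now run Theorem~\ref{thm:rankone-h} with the pair $\{1,3\}$, which is admissible, e.g.\ with $w=(\omega,\omega^2,\omega^2,\omega^2)$. The new hull is $\{cM' : c\in CM,\ c_1=c_3\}=K\,(0,1,0,1)M'$. For both choices of $l$ one has $(0,1,0,1)M'=\mathbf 1_4$, which is exactly the loop you worry about. Other pairs escape in this example, but you give no general rule for choosing them.

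The missing idea is to leave the class of matrices with $MM^\top=I_n+\lambda u^\top u$ altogether. Your own obstruction argument shows that no such rank-one update can kill the hull when $H(C)=K\mathbf 1_n$: for even-weight $u$ the kernel of $GG^\dagger+aa^\dagger$ still contains $z$. A rank-two change of the Gram matrix, however, does it in one step.

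The paper puts $\mathbf 1_n$ as the first row and clears the first coordinate of the other rows, giving $G=\begin{bmatrix}1&\mathbf 1_{n-1}\\0&B_0\end{bmatrix}$. Then $\mathbf 1_{n-1}B_0^\dagger=0$ and $GG^\dagger=\mathrm{diag}(0,B_0B_0^\dagger)$, with $B_0B_0^\dagger$ nonsingular because the nullity is $1$. Take $M=\begin{bmatrix}1&\mathbf 1_{n-1}\\0&I_{n-1}\end{bmatrix}\in\GL_n(\F_2)$, which sends $\mathbf 1_n$ to $e_1$ and fixes rows with first coordinate $0$. Then $GM(GM)^\dagger=\mathrm{diag}(1,B_0B_0^\dagger)$ is nonsingular, so $CM$ is Hermitian LCD with no case analysis.
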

\begin{proof}
Since \(\mathbf{1}_n\in H(C)\), it is self-orthogonal. Then
$ 0=\langle \mathbf{1}_n,\mathbf{1}_n\rangle_H=n $
in \(K\). So \(n\) is even. 

1. Let $G$ be a generator matrix of $C$. Under suitable row operations, we can assume that
\[
G=
\begin{bmatrix}
1 & \mathbf{1}_{n-1}\\
0 & B
\end{bmatrix}.
\]
Since \(\mathbf{1}_n\in C^{\perp_H}\), we have $\mathbf{1}_{n-1}B^\dagger=0.$
Moreover,
\[
GG^\dagger
=
\begin{bmatrix}
0 & 0\\
0 & BB^\dagger
\end{bmatrix}.
\]
Since \(\dim_K H(C)=1\), the nullity of \(GG^\dagger\) is \(1\). Hence \(BB^\dagger\) is
nonsingular.

2.  Define
\[
M=
\begin{bmatrix}
1 & \mathbf{1}_{n-1}\\
0 & I_{n-1}
\end{bmatrix}
\in \GL_n(\F_2).
\]
Then
\[
GM
=
\begin{bmatrix}
1 & \mathbf{1}_{n-1}\\
0 & B
\end{bmatrix}
\begin{bmatrix}
1 & \mathbf{1}_{n-1}\\
0 & I_{n-1}
\end{bmatrix}
=
\begin{bmatrix}
1 & 0\\
0 & B
\end{bmatrix}.
\]
Let $C'$ be the code with generator matrix
$G':=GM$. Then $C'$ is equivalent to $C$. Furthermore,
\[
G'(G')^\dagger
=
\begin{pmatrix}
1 & 0\\
0 & BB^\dagger
\end{pmatrix},
\]
which  is nonsingular because \(BB^\dagger\) is nonsingular. Therefore
$
\dim_K H(C')=0.
$
\end{proof}

From Theorems~\ref{thm:rankone-h} and  \ref{thm:binary-constant-hull}, we obtain the following. 
\begin{theorem}\label{thm:hermitianmain}
Let $q$ be a prime power,  and let $C$ be an
$[n,k]_{q^{2m}/q}$ vector rank-metric code with
$\dim_K H(C) = h$. Assume that $(q, n) \neq (2, 2)$. Then for every
$\ell \in \{0, 1, \ldots, h\}$ there exists an equivalent code $C'$
with $\dim_K H(C') = \ell$.
\end{theorem}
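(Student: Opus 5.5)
The plan is a downward induction on the hull dimension, repeatedly applying Theorems~\ref{thm:rankone-h} and~\ref{thm:binary-constant-hull}. Code equivalence is an equivalence relation: if $C' = CM$ and $C'' = C'M'$ with $M, M' \in \GL_n(\F_q)$, then $C'' = C(MM')$. So it suffices to prove the following one-step statement.

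\emph{One-step claim.} If $C$ is any $[n,k]_{q^{2m}/q}$ code with $(q,n)\neq(2,2)$ and $\dim_K H(C)=h'\ge 1$, then there is an equivalent code $C'$ with $\dim_K H(C') = h'-1$.

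Given this claim, fix $\ell\in\{0,\ldots,h\}$. If $\ell = h$, take $C'=C$. Otherwise apply the claim $h-\ell$ times, each time to the code just produced. Each intermediate code is still an $[n,k]_{q^{2m}/q}$ code with the same pair $(q,n)\ne(2,2)$, so the hypothesis is preserved at every step. Composing the equivalences gives a code equivalent to $C$ with hull dimension exactly $\ell$.

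To prove the claim, split on whether the extra hypothesis of Theorem~\ref{thm:rankone-h} holds.
\begin{itemize}
\item If $q\ne 2$, or if $q=2$ and $H(C)\neq K\mathbf 1_n$, then Theorem~\ref{thm:rankone-h} gives the claim directly.
\item Otherwise $q=2$ and $H(C)=K\mathbf 1_n$, so $h'=1$. Since $(q,n)\ne(2,2)$, either $n\ge 3$ or $n=1$.
\begin{itemize}
\item If $n\ge 3$, Theorem~\ref{thm:binary-constant-hull} yields an equivalent code with hull dimension $0 = h'-1$.
\item If $n=1$, then $\mathbf 1_1\in H(C)$ would force $\langle \mathbf 1_1,\mathbf 1_1\rangle_H = 1 = 0$, which is impossible. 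So this case cannot occur.
\end{itemize}
\end{itemize}
The degenerate lengths can also be checked directly. For $n=1$ in general, any nonzero $v\in K$ has $v\sigma(v)\ne 0$, so the hull is always trivial and the claim is vacuous there. For $n=2$ with $q\ge 3$, the claim is already covered by the corresponding cases of Theorem~\ref{thm:rankone-h}.

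No step here is a genuine obstacle, since the heavy lifting is in the two cited theorems. The only point needing care is the binary exceptional case, where the induction must not get stuck at $H(C)=K\mathbf 1_n$ with $h'>1$. It cannot, because $H(C)=K\mathbf 1_n$ forces the hull to be one-dimensional, so this case only arises at the final step to $\ell=0$. There Theorem~\ref{thm:binary-constant-hull} handles it; its requirement $n\ge 3$ holds because $n=2$ is excluded and $n=1$ was shown impossible above.
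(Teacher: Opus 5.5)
Your proposal is correct and follows the paper's approach: Theorem~\ref{thm:hermitianmain} is obtained by iterating the one-step descent of Theorem~\ref{thm:rankone-h}, and switching to Theorem~\ref{thm:binary-constant-hull} in the binary case $H(C)=K\mathbf 1_n$, which can only occur when the hull is one-dimensional. Your explicit handling of the composition of equivalences and of the degenerate length $n=1$ fills in details that the paper leaves implicit.
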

\begin{corollary}\label{cor:hermitianLCD}
Let $q$ be a prime power. Assume that $(q, n) \neq (2, 2)$.
Then every $[n, k]_{q^{2m}/q}$ vector rank-metric code is equivalent to a
Hermitian LCD code.
\end{corollary}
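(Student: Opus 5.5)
The statement to prove is Corollary~\ref{cor:hermitianLCD}. It is immediate from Theorem~\ref{thm:hermitianmain} by taking $\ell = 0$, so the real work is to justify Theorem~\ref{thm:hermitianmain} from Theorems~\ref{thm:rankone-h} and \ref{thm:binary-constant-hull}. I plan to use downward induction on the hull dimension.

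Let $C$ have $\dim_K H(C) = h$. If $h = 0$ there is nothing to do, since $C$ is already Hermitian LCD. Otherwise I will build a chain of equivalent codes
\[
C = C_h \sim C_{h-1} \sim \cdots
\]
with $\dim_K H(C_j) = j$. Each step needs one application of Theorem~\ref{thm:rankone-h}. Equivalence is transitive, because products of matrices in $\GL_n(\F_q)$ stay in $\GL_n(\F_q)$. Since $\ell = 0$ lies in $\{0,\ldots,h\}$, this chain gives the Hermitian LCD code.

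The step from a code $D$ with $\dim_K H(D) = j \ge 1$ to a code with hull dimension $j-1$ works directly when $q \neq 2$. It also works when $q = 2$ and $H(D) \ne K\mathbf{1}_n$, since then all hypotheses of Theorem~\ref{thm:rankone-h} hold (recall $(q,n)\neq(2,2)$).

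The remaining case is $q = 2$ and $H(D) = K\mathbf{1}_n$. Then $j = 1$, and the target hull dimension is $j-1 = 0$. First I rule out $n = 1$: a length-one code with nonzero hull would need $x\sigma(x) = 0$, forcing $x = 0$. So $n \ge 3$, as $(q,n) \ne (2,2)$. Then Theorem~\ref{thm:binary-constant-hull} directly gives an equivalent code with trivial hull.

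The only subtle point, and the one I expect to need care, is this binary obstruction case. One must check that it can only arise at the last step, when the hull dimension is $1$. It must also not block reaching intermediate values of $\ell$. Since $H(D) = K\mathbf{1}_n$ forces the hull dimension to be exactly $1$, the only value still to reach is $0$, which Theorem~\ref{thm:binary-constant-hull} provides. This completes the argument, and the corollary follows.
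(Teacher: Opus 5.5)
Your proposal is correct and follows the paper's route: the corollary is the case $\ell=0$ of Theorem~\ref{thm:hermitianmain}, which the paper obtains by iterating Theorem~\ref{thm:rankone-h} and invoking Theorem~\ref{thm:binary-constant-hull} in the binary case $H(C)=K\mathbf{1}_n$. Your extra checks fill in details the paper leaves implicit. These are that this binary case forces hull dimension $1$, so it can only occur at the last step, and that it forces $n\ge 3$.
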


\begin{remark}\label{rem:obstruction-22} We discuss the case \((q,n)=(2,2)\). Let
\(K=\F_4=\F_2(\omega)\), where \(\omega^2+\omega+1=0\). Let $v=(1,\om)$ and consider the 1-dimensional
code $ C=Kv\subseteq K^2. $
We have
$ \langle v,v\rangle_H  = 0,  $
and so \(C\subseteq C^{\perp_H}\). In particular,  $ H(C)=C $
and \(\dim_K H(C)=1\).

We claim that no equivalent code is Hermitian LCD. Let
$M = \begin{bmatrix} a & b \\ c & d \end{bmatrix}\in\GL_2(\F_2)$, where
$a, b, c, d\in\F_2$ and $ad - bc = 1$. Then
\[
vM= (1,\omega)M
= (a + c\omega,\, b + d\omega).
\]
Since $M \in\GL_2(\F_2)$, we have $a + c\omega \ne 0$ and $b + d\omega \ne 0$.   Hence,
\[
\langle vM,\,vM\rangle_H
= (a + c\omega)^3 + (b + d\omega)^3
= 1 + 1 = 0.
\]
 Consequently, the
  code $CM = \langle vM\rangle_K$ has
$\dim_K H(CM) = 1$ for every $M\in\GL_2(\F_2)$, and $C$ has no
Hermitian LCD code in its equivalence class.

This example shows that hull-variation is generally not possible for the case $(q,n)=(2,2)$.
\end{remark}

\section{MRD codes with prescribed Hermitian hull dimension}\label{sec:prescribed-hull} 
 
We recall that a scaled trace-self-dual basis  of $K/\F_q$ is a pair $(\alpha, \lambda)$, where $\alpha = (\alpha_1, \ldots, \alpha_{2m})$ is an $\F_q$-basis of $K$ and $\lambda\in K^*$, such that
\[
\Tr_{K/\F_q}(\lambda\alpha_i\alpha_j) = \delta_{ij} \quad\text{for all } 1\leq i,j\leq 2m.
\]
When $\lambda = 1$, the basis $\alpha$ is an ordinary trace-self-dual basis. 
\begin{lemma}\label{lem:Sr}
Let $(\alpha, \lambda)$ be a scaled trace-self-dual basis of $K = \F_{q^{2m}}$ over $\F_q$. For $0\leq r\leq 2m-1$, define
\[
S_r := \sum_{i=1}^{2m} \alpha_i\,\alpha_i^{q^r} \in K.
\]
Then $S_0 = \lambda^{-1}$ and $S_r = 0$ for $1\leq r\leq 2m-1$.
\end{lemma}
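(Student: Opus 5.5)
The plan is to encode the trace conditions in a matrix identity and then invert it. Let $A$ be the $2m\times 2m$ matrix with entries $A_{ri} := \alpha_i^{q^r}$ for $0\le r\le 2m-1$ and $1\le i\le 2m$; this is the Moore matrix of $\alpha$. Since $\alpha$ is an $\F_q$-basis of $K$, the matrix $A$ is invertible. This is the standard Moore determinant fact, and it is also implicit in Lemma~\ref{lem:GabMRD} with $k=2m$, $s=1$. Next I write the hypothesis in matrix form. The $(i,j)$ entry of $A^\top D A$, where $D := \mathrm{diag}(\lambda, \lambda^{q}, \ldots, \lambda^{q^{2m-1}})$, is
\[
\sum_{r=0}^{2m-1} \lambda^{q^r}\alpha_i^{q^r}\alpha_j^{q^r} = \Tr(\lambda\alpha_i\alpha_j) = \delta_{ij}.
\]
Hence $A^\top D A = I_{2m}$.

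From this identity $D A A^\top = I$ follows by a one-sided inverse argument: $A^\top D$ is a left inverse of $A$, hence also a right inverse, so $A A^\top D = I$. Then $A A^\top = D^{-1}$, a diagonal matrix with $(0,0)$ entry $\lambda^{-1}$. It remains to read off the entries of $AA^\top$. Its $(r,s)$ entry is $\sum_i \alpha_i^{q^r}\alpha_i^{q^s}$. In particular, the $(0,r)$ entry is exactly $S_r$. So $S_0=\lambda^{-1}$, and $S_r=0$ for $1\le r\le 2m-1$, because these are off-diagonal entries.

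The only point needing care is the invertibility of the Moore matrix and keeping the index conventions consistent, i.e.\ checking that $(A^\top D A)_{ij}$ really is the trace. Both are routine. As a sanity check, one can also verify the conclusion directly: $S_r$ is obtained from the $(r,0)$ entry by applying Frobenius, which gives $\sum_i \alpha_i^{q^{r}}\alpha_i$, consistent with the symmetry of $AA^\top$.
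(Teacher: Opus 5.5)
Your proof is correct, but it takes a different route from the paper's.

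The paper's argument runs as follows.
\begin{itemize}
\item It observes that $(\lambda\alpha_i)_i$ is the trace-dual basis of $(\alpha_i)_i$, so $x=\sum_i\alpha_i\Tr(\lambda\alpha_i x)$ for every $x\in K$.
\item Expanding the trace gives $x=\sum_{r=0}^{2m-1}\lambda^{q^r}S_r\,x^{q^r}$ on all of $K$.
\item It then compares coefficients, using that two polynomials of degree less than $q^{2m}$ that agree on $K$ coincide.
\end{itemize}

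You instead package the hypothesis as $A^\top DA=I$ for the Moore matrix $A$. The only ingredient you then need is that a one-sided inverse of a square matrix is two-sided. You do not even need the Moore determinant fact you cite, since $A^\top DA=I$ by itself forces $\det A\neq 0$.

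The two arguments are closely related. The paper's identity is the $0$-th row of $AA^\top D=I$ applied to the vectors $(x^{q^r})_r$. Its coefficient comparison uses, in polynomial form, the same linear independence of the Frobenius powers that underlies the invertibility of $A$.

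Your formulation buys two things.
\begin{itemize}
\item It gives the full identity $\sum_i\alpha_i^{q^r}\alpha_i^{q^s}=\delta_{rs}\lambda^{-q^r}$ in one stroke.
\item It gives a transparent converse. Each entry $(AA^\top)_{rs}$ is a Frobenius conjugate of $S_{|s-r|}$, so the vanishing of $S_1,\dots,S_{2m-1}$ makes $AA^\top$ diagonal. Invertibility of $A$ then gives $S_0\neq 0$. Reversing your steps shows that $(\alpha,S_0^{-1})$ is a scaled trace-self-dual basis.
\end{itemize}
This makes the connection in Remark~\ref{rem:weak-so-sidorenko} with the Moore-matrix condition $M_n(\alpha)M_n(\alpha)^\top=D$ immediate.

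The paper's route, by contrast, avoids matrices entirely and stays in the language of dual bases and linearized polynomials.
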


\begin{proof}
Since $\Tr(\lambda\alpha_i\alpha_j) = \delta_{ij}$, the family $(\lambda\alpha_1, \ldots, \lambda\alpha_{2m})$ is the trace-dual basis of $(\alpha_1, \ldots, \alpha_{2m})$ in $K/\F_q$. Therefore every $x\in K$ admits the expansion
\[
x = \sum_{i=1}^{2m} \alpha_i\,\Tr(\lambda\alpha_i x).
\]
Expanding the trace as $\Tr(y) = \sum_{r=0}^{2m-1} y^{q^r}$ and rearranging,
\begin{align*}
x &= \sum_{i=1}^{2m} \alpha_i \sum_{r=0}^{2m-1} (\lambda\alpha_i x)^{q^r} \\
  &= \sum_{r=0}^{2m-1} \lambda^{q^r}\,\Bigl(\sum_{i=1}^{2m} \alpha_i\,\alpha_i^{q^r}\Bigr)\,x^{q^r} \\
  &= \sum_{r=0}^{2m-1} \lambda^{q^r}\,S_r\,x^{q^r}.
\end{align*}
Here, the polynomials $P(X)=X$ and $Q(X)=\sum_{r=0}^{2m-1} \lambda^{q^r}\,S_r\,X^{q^r}$ are both elements in $K[X]$ and have degree at most $q^{2m-1}$. Since $P(x)=Q(x)$ for every $x \in K$,  they coincide in $K[X]$. Matching the coefficients, we have 
\[
\lambda S_0 = 1, \qquad \lambda^{q^r} S_r = 0 \quad\text{for } 1 \leq r \leq 2m-1.
\]
Hence $S_0 = \lambda^{-1}$ and, since $\lambda\neq 0$, $S_r = 0$ for $r\geq 1$.
\end{proof}

\begin{theorem}\label{thm:so-mrd}
Let $(\alpha,\lambda)$ be a scaled trace-self-dual basis of $K/\F_q$, and let $s$ be a positive integer with $\gcd(s, 2m) = 1$. For every $1\leq k\leq m$, the generalized Gabidulin code
$G_{k,s}(\alpha)$
is a Hermitian self-orthogonal MRD code with parameters $[2m, k, 2m-k+1]_{q^{2m}/q}$.
\end{theorem}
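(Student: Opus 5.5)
MRD with the stated parameters comes directly from Lemma~\ref{lem:GabMRD}. The remaining task is Hermitian self-orthogonality, meaning $C\subseteq C^{\perp_H}$ for $C:=G_{k,s}(\alpha)$. By Lemma~\ref{lem:rankhullH} this is equivalent to $G^{(s)}(G^{(s)})^\dagger=0$, i.e.\ $\dim_K H(C)=k$. It therefore suffices to show that all pairwise Hermitian inner products of the rows of $G^{(s)}$ vanish.

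We compute the $(a,b)$ entry for $0\le a,b\le k-1$:
\[
\bigl\langle \alpha^{q^{sa}},\alpha^{q^{sb}}\bigr\rangle_H=\sum_{i=1}^{2m}\alpha_i^{q^{sa}}\,\alpha_i^{q^{sb+m}}.
\]
Write this as a Frobenius image of an $S_r$ from Lemma~\ref{lem:Sr}. Setting $r\equiv sb+m-sa \pmod{2m}$ with $0\le r\le 2m-1$, and using that $x\mapsto x^{q^{sa}}$ is a field automorphism with $q^{2m}$-powering trivial on $K$, the sum equals $\bigl(\sum_i \alpha_i\alpha_i^{q^{r}}\bigr)^{q^{sa}}=S_r^{q^{sa}}$. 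By Lemma~\ref{lem:Sr}, this vanishes whenever $r\neq 0$. So the whole proof reduces to showing $s(b-a)+m\not\equiv 0\pmod{2m}$ for all $a,b\in\{0,\dots,k-1\}$.

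This congruence is the only real obstacle, and I would handle it by parity. Suppose $s(b-a)\equiv -m\equiv m \pmod{2m}$. Then $m\mid s(b-a)$. Since $\gcd(s,2m)=1$, also $\gcd(s,m)=1$, so $m\mid (b-a)$. But $|b-a|\le k-1\le m-1$, which forces $b=a$. Then $0\equiv m\pmod{2m}$, contradicting $m\ge1$. Hence $r\ne0$ always, every entry of $G^{(s)}(G^{(s)})^\dagger$ is $0$, and $C$ is Hermitian self-orthogonal. The hypothesis $k\le m$ is exactly what is needed here. The diagonal entries ($a=b$) give $S_m^{q^{sa}}=0$, which is where the scaling $\lambda$ is harmless: only $S_0=\lambda^{-1}$ is nonzero, and it never occurs.
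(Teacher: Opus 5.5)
Your proposal is correct and follows essentially the same route as the paper: you get the MRD property from Lemma~\ref{lem:GabMRD}, write each row inner product as a Frobenius image $S_r^{q^{sa}}$, and use Lemma~\ref{lem:Sr} together with $|b-a|\le m-1$ to rule out $r=0$. The differences are cosmetic. You exclude $s(b-a)\equiv m \pmod{2m}$ via $m\mid(b-a)$, using $\gcd(s,m)=1$, where the paper uses that $s$ odd gives $sm\equiv m\pmod{2m}$. You also reduce the index of $S$ modulo $2m$ explicitly, which is slightly more careful than the paper's $S_{(j-i)s+m}$ when $j<i$.
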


\begin{proof}
Let $C := G_{k,s}(\alpha)$.
By Lemma~\ref{lem:GabMRD}, $C$ has parameters $[2m, k, 2m-k+1]_{q^{2m}/q}$ and is MRD.
It remains to show that $C\subseteq C^{\perp_H}$.

For $0 \le i \le k-1$, the row $v_i$ of the generator matrix $G^{(s)}$ is
\[
v_i  = \alpha^{q^{is}} = (\alpha_1^{q^{is}}, \ldots, \alpha_{2m}^{q^{is}}).
\]
For $0 \le i, j \le k-1$, the Hermitian inner product of $v_i$ and $v_j$ is
\[
\langle v_i, v_j\rangle_H
= \sum_{t=1}^{2m}\alpha_t^{q^{is}}\,\sigma(\alpha_t^{q^{js}})
= \sum_{t=1}^{2m}\alpha_t^{q^{is}}\,\alpha_t^{q^{js+m}}
= \Bigl(\sum_{t=1}^{2m} \alpha_t\,\alpha_t^{q^{(j-i)s+m}}\Bigr)^{q^{is}}
= \bigl(S_{(j-i)s+m}\bigr)^{q^{is}}.
\]
By Lemma~\ref{lem:Sr}, $S_r = 0$ unless $r \equiv 0 \pmod{2m}$. So
$\langle v_i, v_j\rangle_H = 0$ unless $(j-i)s + m \equiv 0 \pmod{2m}$, equivalently
\[
(j-i)s \equiv m \pmod{2m}.
\]
Since $\gcd(s, 2m) = 1$ and $2m$ is even, $s$ is odd, so $sm \equiv m \pmod{2m}$. Then
\[
(j-i)s \equiv m \pmod{2m} \iff j - i \equiv m \pmod{2m},
\]
which is not possible, since $\lvert j - i \rvert \le k - 1 \le m - 1$.
Therefore the rows of $G^{(s)}$ are mutually Hermitian-orthogonal, and so $C\subseteq C^{\perp_H}$.
\end{proof}

\begin{corollary}\label{cor:so-mrd-self-dual}
For every prime power $q$ and every $m\geq 1$, there exists a Hermitian self-dual MRD code in $\F_{q^{2m}}^{2m}$ with parameters $[2m, m, m+1]_{q^{2m}/q}$.
\end{corollary}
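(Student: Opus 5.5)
The plan is to specialize Theorem~\ref{thm:so-mrd} to $k = m$ and then count dimensions. By Lemma~\ref{lem:scaled-basis}, for every prime power $q$ and every $m \ge 1$ there is a scaled trace-self-dual basis $(\alpha,\lambda)$ of $K = \F_{q^{2m}}$ over $\F_q$. We fix $s = 1$, which satisfies $\gcd(s,2m)=1$. We then consider the Gabidulin code $C := G_{m,1}(\alpha) = G_m(\alpha)$. Since $k = m$ lies in the admissible range $1 \le k \le m$, Theorem~\ref{thm:so-mrd} shows that $C$ is a Hermitian self-orthogonal MRD code with parameters $[2m, m, m+1]_{q^{2m}/q}$, because $2m - m + 1 = m+1$.

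It remains to upgrade self-orthogonality $C \subseteq C^{\perp_H}$ to equality. The key point is that $\dim_K C^{\perp_H} = n - \dim_K C$ for the Hermitian form on $K^n$, here with $n = 2m$. To see this, note that $C^{\perp_H}$ is the kernel of the $K$-linear map $x \mapsto x G^\dagger$, where $G$ is a $k \times n$ generator matrix of $C$. The matrix $G^\dagger$ is obtained by applying the field automorphism $\sigma$ entrywise to $G^\top$, so it has rank $k$. The kernel therefore has dimension $n-k$.

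With $n = 2m$ and $k = m$, we get $\dim_K C^{\perp_H} = m = \dim_K C$. Combined with $C \subseteq C^{\perp_H}$, this forces $C = C^{\perp_H}$, so $C$ is Hermitian self-dual.

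No step here is a real obstacle. The only care needed is to confirm that the $k = m$ boundary case is genuinely covered by Theorem~\ref{thm:so-mrd}. In its proof, the bound $|j-i| \le m-1$ excludes $j - i \equiv m \pmod{2m}$, and this holds for $k = m$. It also holds when $m = 1$, where $k = 1$ and only $i = j = 0$ occurs, so $S_m = S_1 = 0$ gives self-orthogonality.
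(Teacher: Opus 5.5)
Your proposal is correct and follows the paper's own (unstated) argument: take a scaled trace-self-dual basis from Lemma~\ref{lem:scaled-basis}, apply Theorem~\ref{thm:so-mrd} with $k=m$, and use $\dim_K C^{\perp_H} = 2m-m = m$ to upgrade self-orthogonality to self-duality. Your explicit justification of that dimension via the $K$-linear map $x\mapsto xG^\dagger$, and your check of the boundary case $k=m$ (including $m=1$), supply details the paper leaves implicit.
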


From Theorems \ref{thm:hermitianmain} and \ref{thm:so-mrd}, we have the following. 
\begin{theorem}\label{thm:combined}
Assume that $(q, m) \neq (2, 1)$. Let $1 \leq k \leq m$ and $0 \leq \ell \leq k$.
Then there exists an MRD code $C \subseteq K^{2m}$ with parameters
$[2m, k, 2m-k+1]_{q^{2m}/q}$ and Hermitian hull dimension $\ell$.
\end{theorem}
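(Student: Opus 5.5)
The plan is to combine the self-orthogonal construction with the hull-variation theorem, and to use the fact that equivalence preserves the MRD property. First I fix $q$, $m$, $k$ with $1\le k\le m$. By Lemma~\ref{lem:scaled-basis} I take a scaled trace-self-dual basis $(\alpha,\lambda)$ of $K/\F_q$. By Theorem~\ref{thm:so-mrd} with $s=1$, the Gabidulin code $C_0:=G_{k,1}(\alpha)$ is a Hermitian self-orthogonal MRD code with parameters $[2m,k,2m-k+1]_{q^{2m}/q}$. Self-orthogonality means $C_0\subseteq C_0^{\perp_H}$, so $H(C_0)=C_0$ and $\dim_K H(C_0)=k$.

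Next I apply Theorem~\ref{thm:hermitianmain} to $C_0$, with $n=2m$ and $h=k$. Its hypothesis $(q,n)\neq(2,2)$ is exactly our assumption $(q,m)\neq(2,1)$, because $n=2m$. So for every $\ell\in\{0,\dots,k\}$ there is a code $C'=C_0A$, with $A\in\GL_{2m}(\F_q)$, satisfying $\dim_K H(C')=\ell$.

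Finally, right multiplication by $A\in\GL_{2m}(\F_q)$ preserves the $K$-dimension and the rank weight of every codeword. Indeed, the $\F_q$-span of the coordinates of $cA$ equals that of $c$, since $A$ is invertible over $\F_q$. So $C'$ still has parameters $[2m,k,2m-k+1]_{q^{2m}/q}$ and is MRD. Alternatively, Remark~\ref{rem:gengabi-equiv} shows that $C'$ is itself a Gabidulin code $G_{k,1}(\alpha A)$, since $\alpha A$ is again a basis. This gives the required code.

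I do not expect a real obstacle here. The only points to check are that the excluded case matches under $n=2m$, and that $H(C_0)$ has dimension $k$ rather than something smaller; the latter is immediate from self-orthogonality.
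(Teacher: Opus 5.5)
Your proof is correct and follows the paper's route exactly: the paper obtains Theorem~\ref{thm:combined} by combining Theorem~\ref{thm:so-mrd} (using the scaled trace-self-dual basis from Lemma~\ref{lem:scaled-basis}) with the hull-variation Theorem~\ref{thm:hermitianmain}. Your specialization to $s=1$ is harmless. Your explicit checks that $(q,n)=(2,2)$ corresponds to $(q,m)=(2,1)$ under $n=2m$, and that $\F_q$-equivalence preserves the MRD parameters (or, via Remark~\ref{rem:gengabi-equiv}, yields another Gabidulin code $G_{k,1}(\alpha A)$), spell out details the paper leaves implicit.
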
 
 
\begin{remark}\label{rem:weak-so-sidorenko}
The   identities $S_r = 0$ for $1 \le r \le 2m-1$ in Lemma~\ref{lem:Sr} are  the scaled Hermitian analogue of a condition appearing in the decoding literature. Jerkovits, Sidorenko, and Wachter-Zeh~\cite{jerkovits2021} call an $\F_q$-basis $\alpha$ of $\F_{q^n}$ a \emph{weak self-orthogonal basis} if the Moore matrix $M_n(\alpha)$ satisfies $M_n(\alpha)\,M_n(\alpha)^\top = D$ for some diagonal $D$, and they use such bases as code locators for Gabidulin codes in a decoder for space-symmetric rank errors. After re-indexing, their condition is equivalent to the vanishing of $\sum_l \alpha_l\,\alpha_l^{q^r}$ for all $1 \le r \le n-1$, which is the Euclidean specialisation $\lambda = 1$ of Lemma~\ref{lem:Sr}. 
\end{remark}


\section{Examples}\label{sec:examples} 
In this section, we provide three examples to demonstrate the machinery used in Sections~\ref{sec:hullvariation} and~\ref{sec:prescribed-hull}.
Examples~\ref{ex:binary-constant-hull} and~\ref{ex:rankone-descent-h1} illustrate the hull-variation construction of Section~\ref{sec:hullvariation} for  the binary case. 
Example~\ref{ex:binary-constant-hull} treats a code with $H(C) = K\mathbf 1_n$, covered by Theorem~\ref{thm:binary-constant-hull}, and Example~\ref{ex:rankone-descent-h1} treats a code with $H(C) \neq K\mathbf 1_n$, covered by Theorem~\ref{thm:rankone-h}. 

Example~\ref{ex:F9-G1} demonstrates the construction of a scaled trace-self-dual basis of $\F_9/\F_3$ and uses it to produce a Hermitian self-orthogonal generalized Gabidulin code in $\F_9^2$, illustrating Theorem~\ref{thm:so-mrd}.

\begin{example}\label{ex:binary-constant-hull}
Let \(q=2\), \(m=2\), so \(K=\F_{2^4}=\F_{16}\).
Let $\om \in K$ be such that $\omega^4+\omega+1=0$ and identify
$ K=\F_2(\omega).$
Then 
\[
\sigma(\omega)=\omega+1,\qquad \sigma(\omega+1)=\omega.
\]
Let \(n=4\), and consider the \([4,2]_{16/2}\) vector rank-metric code \(C\subseteq K^4\)
with generator matrix
\[
G=
\begin{bmatrix}
1 & 1 & 1 & 1 \\
0 & 1 & \omega & \omega+1
\end{bmatrix}.
\]
Let $ B=(1,\omega,\omega+1)\in K^{1\times 3}. $ 
Using \(\sigma(\omega)=\omega+1\) and \(\sigma(\omega+1)=\omega\), we get
\[
BB^\dagger
=
1+\omega(\omega+1)+(\omega+1)\omega
=
1+(\omega^2+\omega)+(\omega^2+\omega)
=
1.
\]
Hence,
\[
GG^\dagger
=
\begin{bmatrix}
0 & 0 \\
0 & BB^\dagger
\end{bmatrix}
=
\begin{bmatrix}
0 & 0 \\
0 & 1
\end{bmatrix}.
\]
We see that $(GG^\dagger)e_1^\top = 0$, so $e_1^\top \in \ker(GG^\dagger)$. Applying the bijection $z \mapsto z^\dagger G$ of Lemma~\ref{rankhullH}, we obtain $\mathbf 1_4 = e_1 G \in H(C)$. Therefore, $\dim_K H(C) = 1$ and $H(C) = K\mathbf 1_4$.

Now we  use the
construction from Theorem~\ref{thm:binary-constant-hull}. Set
\[
M=
\begin{bmatrix}
1 & 1 & 1 & 1 \\
0 & 1 & 0 & 0 \\
0 & 0 & 1 & 0 \\
0 & 0 & 0 & 1
\end{bmatrix}
\in \GL_4(\F_2).
\]
Then
\[
G':=GM
=
\begin{bmatrix}
1 & 0 & 0 & 0 \\
0 & 1 & \omega & \omega+1
\end{bmatrix}.
\]
Consequently,
\[
G'(G')^\dagger
=
\begin{bmatrix}
1 & 0 \\
0 & BB^\dagger
\end{bmatrix}
=
\begin{bmatrix}
1 & 0 \\
0 & 1
\end{bmatrix}.
\]
This matrix is nonsingular, so
\[
\dim_K H(CM)=2-\rank_K\bigl(G'(G')^\dagger\bigr)=0.
\]
Therefore \(C':=CM\) is Hermitian LCD and is  equivalent to \(C\).
\end{example}

\begin{example}\label{ex:rankone-descent-h1}
Let \(q=2\), \(m=2\), so \(K=\F_{2^4}=\F_{16}\).
Let $\om \in K$ be such that $\omega^4+\omega+1=0$ and identify
$ K=\F_2(\omega)$ as in Example \ref{ex:binary-constant-hull}. 
Let $n=4$ and consider the \([4,2]_{16/2}\) vector rank-metric code \(C\) with generator matrix
\[
G=
\begin{bmatrix}
\omega^3 & 1 & 0 & 0 \\
0 & 0 & 1 & 0
\end{bmatrix}.
\]
Let $ w=(\omega^3,1,0,0)$ be the first row of \(G\).  Then
\[
\langle w,w\rangle_H
=
\omega^3\sigma(\omega^3)+1
=
1+1
=
0.
\]
The second row \(v=(0,0,1,0)\) satisfies
\[
\langle v,w\rangle_H=0,
\qquad
\langle v,v\rangle_H=1.
\]
Therefore
\[
GG^\dagger=
\begin{bmatrix}
0&0\\
0&1
\end{bmatrix},
\]
and $ \dim_K H(C)=2-\rank_K(GG^\dagger)=1.$ Furthermore, we see that $w \in H(C)$, and so
$ H(C)=Kw \ne K\mathbf 1_4.$ 

We now follow the \(q=2\) construction in Lemma~\ref{lem:descent-matrix}.
Let
\[
u=e_1+e_2=(1,1,0,0)\in\F_2^4.
\]
Then $ wu^\top=\omega^3+1\neq 0.$
Choosing \(l=3\) in the proof of Lemma~\ref{lem:descent-matrix} gives
\[
M=
\begin{bmatrix}
1&1&0&0\\
0&1&1&0\\
1&1&1&0\\
0&0&0&1
\end{bmatrix}
\in \GL_4(\F_2).
\]
A direct computation gives
\[
MM^\top
=
\begin{bmatrix}
0&1&0&0\\
1&0&0&0\\
0&0&1&0\\
0&0&0&1
\end{bmatrix}
=
I_4+u^\top u.
\]

Let \(C':=CM\), with generator matrix
\[
G':=GM
=
\begin{bmatrix}
\omega^3 & \omega^3+1 & 1 & 0\\
1&1&1&0
\end{bmatrix}.
\]
Then
\[
G'(G')^\dagger
=
GMM^\top G^\dagger
=
GG^\dagger+(Gu^\top)(Gu^\top)^\dagger.
\]
Here, 
\[
Gu^\top=
\begin{bmatrix}
\omega^3+1\\
0
\end{bmatrix},
\]
and $(\omega^3+1)\sigma(\omega^3+1)=\omega^2+\omega+1\neq 0. $
Thus
\[
G'(G')^\dagger
=
\begin{bmatrix}
\omega^2+\omega+1&0\\
0&1
\end{bmatrix},
\]
which is nonsingular. Hence $ \dim_K H(C')=2-\rank_K\bigl(G'(G')^\dagger\bigr)=0.$
Therefore \(C'=CM\) is Hermitian LCD and is rank-metric equivalent to \(C\).
\end{example}

\begin{example}\label{ex:F9-G1}
Let $q=3$, $m=1$, so $K = \F_9$. We demonstrate the construction of  a scaled trace-self-dual basis of $\F_9/\F_3$ together with a Hermitian self-orthogonal MRD code of dimension $k=1$ in $\F_9^2$.

Let $\om \in K$ be such that $\omega^2+1=0$ and identify
$ K=\F_3(\omega)$. For an element $a+b\omega \in K$, its trace  is $\Tr(a+b\omega) = 2a$ and its norm is $N(a+b\omega) = a^2 + b^2$.

The Gram matrix of $B_1$ in the basis $(1,\omega)$ is
\[
M_1 = \begin{pmatrix} \Tr(1) & \Tr(\omega) \\ \Tr(\omega) & \Tr(\omega^2)\end{pmatrix} = \begin{pmatrix} 2 & 0 \\ 0 & 1\end{pmatrix}, \qquad \Delta = \det(M_1) = 2.
\]
Following the proof of Lemma~\ref{lem:scaled-basis}, we search for $\lambda \in \F_9^*$ with $N(\lambda) = \Delta^{-1} = 2$. Take $\lambda = 1+\omega$, whose norm is $N(\lambda)= 1 + 1 = 2$.

Let $\alpha := (\omega, 1-\omega)$, which is an $\F_3$-basis of $\F_9$.
We claim that $(\alpha, \lambda)$ is a scaled trace-self-dual basis, i.e. $\Tr(\lambda\alpha_i\alpha_j) = \delta_{ij}$.
Using $\omega^2 = -1$, we compute $\alpha_1^2 = -1$, $\alpha_2^2 = \omega$, and $\alpha_1\alpha_2 = 1+\omega$. Hence,
\[
\lambda \alpha_1^2 = -(1+\omega) = 2+2\omega, \quad
\lambda \alpha_2^2 = (1+\omega)\omega = 2+\omega, \quad
\lambda \alpha_1\alpha_2 = (1+\omega)^2 = 2\omega.
\]
We have $\Tr(\lambda\alpha_1^2) = \Tr(\lambda\alpha_2^2) = 1$ and $\Tr(\lambda\alpha_1\alpha_2) = 0$, as required.

We now describe the Gabidulin code from the basis $(\alpha, \lambda)$ as in Theorem \ref{thm:so-mrd}. Let $s := 1$. By Lemma~\ref{lem:GabMRD}, the  code $G_{1,1}(\alpha) = K\cdot\alpha \subseteq \F_9^2$ is a $[2,1,2]_{9/3}$ MRD code. Since $\sigma(\omega) = -\omega$, we have
\[
\langle \alpha, \alpha\rangle_H = \omega \cdot (-\omega) + (1-\omega)(1+\omega) = -\omega^2 + (1 - \omega^2) = 1 + 2 = 0,
\]
so $G_{1,1}(\alpha)$ is Hermitian self-orthogonal. Since the dimension equals half the length, the code is in fact Hermitian self-dual.
\end{example}

\section{Comparison with the Euclidean inner product}\label{sec:euclidean-comparison}
In this section we consider  analogous results of
Sections~\ref{sec:hullvariation} and~\ref{sec:prescribed-hull} for the
Euclidean inner product. Let $q$ be a prime
power   and let $K:=\F_{q^m}$. Let $n \ge 2$. We recall the standard Euclidean inner product on $K^n$ is given as 
\[
\langle x,y\rangle_E:=\sum_{i=1}^n x_i y_i .
\]
For a vector rank-metric code $C\subseteq K^n$, we define its Euclidean dual $C^{\perp_E}$ and its Euclidean hull $H_E(C)$ as
\[
C^{\perp_E}:=\{x\in K^n:\langle x,c\rangle_E=0
\text{ for all }c\in C\}, \qquad H_E(C):=C\cap C^{\perp_E}. 
\]

The hull-variation results of Section~\ref{sec:hullvariation} carry over to
this setting. Replacing $G^\dagger$ by $G^\top$ throughout the proofs of
Theorems~\ref{thm:rankone-h} and~\ref{thm:binary-constant-hull} gives their
Euclidean analogues, since the equivalence matrices $M\in\GL_n(\F_q)$
already satisfy $M^\dagger=M^\top$, so the  case $n \ge 3$
is handled by the same constructions. The case $n=2$ was handled in \cite[Theorems 3.4 and 3.5]{ho2026}.
We obtain the full Euclidean
hull-variation classification in Theorem~\ref{thm:euclideanmain} below, which
in particular covers the case $q\in\{2,3\}$ with $h\geq 2$ left open
in~\cite{ho2026}, where only the intermediate hull dimensions
$\ell\in\{0,\ldots,h-2\}$ were attained.

\begin{theorem} \label{thm:euclideanmain}
Let $C$ be an $[n,k]_{q^m/q}$ vector rank-metric code with
$\dim_K H_E(C)=h$. 
Then there exists an  $[n,k]_{q^m/q}$ code $C'$ equivalent to $C$ with 
$\dim_K H_E(C')=\ell$ for each $\ell\in\{0,1,\ldots,h\}$.
\end{theorem}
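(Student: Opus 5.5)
Proceed by downward induction on the hull dimension: it suffices to show that if $\dim_K H_E(C)=h\ge 1$ then some equivalent code has Euclidean hull dimension exactly $h-1$, and then iterate (equivalence is transitive, and $\ell=h$ is trivial). The Euclidean analogue of Lemma~\ref{lem:rankhullH} holds with $G^\dagger$ replaced by $G^\top$: $z\mapsto z^\top G$ is a linear bijection $\ker(GG^\top)\to H_E(C)$, so $\dim_K H_E(C)=\nullity(GG^\top)$. Likewise Lemma~\ref{lem:rankone} holds for a symmetric $B$ and the update $B+\lambda aa^\top$, with the same kernel-intersection argument.

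\textbf{Case $n\ge 3$, and $H_E(C)\neq K\mathbf 1_n$ if $q=2$.} Pick nonzero $w\in H_E(C)$ (with $w\notin K\mathbf 1_n$ if $q=2$), take $z$ with $z^\top G=w$, and use Lemma~\ref{lem:descent-matrix} to get $u\in\F_q^n$ and $M\in\GL_n(\F_q)$ with $wu^\top\ne0$ and $MM^\top=I_n+\lambda u^\top u$. Lemma~\ref{lem:descent-matrix} only produces matrices over $\F_q$ and makes no reference to $\sigma$, so it applies verbatim over $K=\F_{q^m}$. Then $GM(GM)^\top=GG^\top+\lambda aa^\top$ with $a=Gu^\top$ and $z^\top a=wu^\top\neq0$, so the Euclidean Lemma~\ref{lem:rankone} gives nullity $h-1$. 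This case includes $q\in\{2,3\}$ with any $h\ge1$, in particular $h\ge2$, which is the case left open in~\cite{ho2026}.

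\textbf{Case $q=2$, $n\ge3$, $H_E(C)=K\mathbf 1_n$.} Copy Theorem~\ref{thm:binary-constant-hull}. Write $G=\begin{bmatrix}1&\mathbf 1_{n-1}\\0&B\end{bmatrix}$; then $\mathbf 1_{n-1}B^\top=0$ and $GG^\top=\mathrm{diag}(0,BB^\top)$ with $BB^\top$ nonsingular. Multiplying by the same $M$ gives $\mathrm{diag}(1,BB^\top)$, since $\langle\mathbf 1_n,\mathbf 1_n\rangle_E=n=0$ in characteristic $2$. Hence the hull becomes trivial, i.e.\ of dimension $h-1=0$.

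\textbf{Case $n=2$.} Here $k=h=1$ necessarily, and the result is exactly \cite[Theorems 3.4 and 3.5]{ho2026}. In contrast to Remark~\ref{rem:obstruction-22}, no exception arises for $(q,n)=(2,2)$. To see this, if $C=Kv$ with $v=(x,y)$ and $x^2+y^2=0$, then $x=y$ in characteristic $2$, so $v\in K\mathbf 1_2$. Applying $M=\begin{bmatrix}1&1\\0&1\end{bmatrix}$ gives $vM=(x,0)$, which is not isotropic. For odd $q$ one can use the diagonal trick when $q\ge4$, and for $q=3$ a quadratic-in-$t$ argument as in Lemma~\ref{lem:descent-n2-q3}, with $P(t)=2txy+t^2y^2$ not identically zero on $\F_3$.

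The main point to check carefully is that every ingredient is $\sigma$-free: the Euclidean versions of Lemmas~\ref{lem:rankhullH} and~\ref{lem:rankone}, and the independence of Lemma~\ref{lem:descent-matrix} from the extension field. Once these are verified, the remaining obstacle is only the bookkeeping of the small cases.
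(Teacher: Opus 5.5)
Your proposal is correct and takes essentially the same route as the paper. You descend one step at a time, $h\to h-1$, using the transposed versions of Lemmas~\ref{lem:rankhullH} and~\ref{lem:rankone}, the $\sigma$-free Lemma~\ref{lem:descent-matrix} for $n\ge 3$, the construction of Theorem~\ref{thm:binary-constant-hull} when $q=2$ and $H_E(C)=K\mathbf 1_n$, and \cite[Theorems 3.4 and 3.5]{ho2026} for $n=2$. Your direct verification of $n=2$ is also correct: in characteristic $2$, $x^2+y^2=0$ forces $v\in K\mathbf 1_2$, so $(q,n)=(2,2)$ is no exception in the Euclidean setting.
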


In contrast to the Hermitian setting of
Section~\ref{sec:prescribed-hull}, the analogous construction of MRD codes
with every prescribed Euclidean hull dimension is not  always  possible.
Following the proof of \cite[Theorem 2.1]{nebe2016}, we show that Euclidean self-orthogonal MRD codes do not exist in even characteristic.

\begin{proposition}\label{prop:euclidean-no-so-even}
Let $q$ be even, $K:=\F_{q^m}$, and assume that $1\le n\le m$.
Let $C\subseteq K^n$ be an $\F_{q^m}$-linear MRD code of dimension
$k\ge 1$. Then $C$ is not Euclidean self-orthogonal. 
\end{proposition}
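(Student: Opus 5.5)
The plan is to follow Nebe's argument from~\cite{nebe2016}: in characteristic $2$, self-orthogonality forces a vector of small rank weight into $C$, which contradicts the MRD property. Suppose for contradiction that $C\subseteq C^{\perp_E}$ with $\dim_K C=k\ge 1$. Since $C\subseteq C^{\perp_E}$ and $\dim C^{\perp_E}=n-k$, we get $k\le n/2$. We also have $d_R(C)=n-k+1\ge 2$, and the dual of an MRD code is MRD, so $d_R(C^{\perp_E})=k+1$.

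The key characteristic-$2$ identity is
\[
\langle c,c\rangle_E=\sum_i c_i^2=\Bigl(\sum_i c_i\Bigr)^2 .
\]
Hence every $c\in C$ satisfies $\sum_i c_i=0$, that is, $\langle c,\mathbf 1_n\rangle_E=0$. This gives $\mathbf 1_n\in C^{\perp_E}$. The vector $\mathbf 1_n$ has rank weight $1$, since all its entries lie in $\F_q$. But $C^{\perp_E}$ is MRD of dimension $n-k$, so its minimum rank distance is $k+1\ge 2$. This is a contradiction.

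The one step needing care is the claim that the Euclidean dual of an MRD code is MRD; this is Delsarte/Gabidulin duality for $n\le m$. Alternatively, I can avoid it and argue directly with the Singleton bound. $C$ being MRD means that for any $A\in\GL_n(\F_q)$, the code $CA$ is MRD with the same parameters, because equivalence preserves rank distance. Choose $A$ with first column $\mathbf 1_n^\top$. The condition $\sum_i c_i=0$ then says that every codeword of $CA$ has first coordinate zero. Puncturing that coordinate gives an $[n-1,k]$ code with minimum rank distance at least $n-k+1$. This violates the Singleton-like bound $d\le (n-1)-k+1$ for length $n-1\le m$.

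The puncturing step uses that deleting a zero coordinate preserves the $\F_q$-span of the entries, so rank weights are unchanged. The dimension stays $k$, since deleting a zero coordinate is injective. For $n=1$ we have $k=1$, and then $C=K$ contains $1$ with $1\cdot 1\neq 0$, which is immediate. This direct route uses only the Singleton bound stated in the preliminaries, so I will present it and keep the duality remark as an aside.
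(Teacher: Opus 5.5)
Your proposal is correct. Your first argument is exactly the paper's proof. It uses the characteristic-$2$ identity $\sum_i c_i^2=(\sum_i c_i)^2$ to get $\mathbf 1_n\in C^{\perp_E}$. Then MRD duality (the paper cites \cite[Corollary~41]{ravagnani2016}) gives $d_R(C^{\perp_E})=k+1\ge 2$, which contradicts the rank-$1$ vector $\mathbf 1_n$.

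The route you say you will actually present differs only in how you extract the contradiction from $\sum_i c_i=0$. Instead of passing to the dual, you apply $A\in\GL_n(\F_q)$ with first column $\mathbf 1_n^\top$, so that $CA$ has an identically zero coordinate. You then puncture that coordinate and violate the Singleton-like bound at length $n-1\le m$. This is valid for four reasons:
\begin{itemize}
\item equivalence and deletion of a zero coordinate both preserve rank weights;
\item deleting an identically zero coordinate is injective on the code, so the dimension stays $k$;
\item $k\le n-1$ holds automatically;
\item you handle $n=1$ separately.
\end{itemize}

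What your version buys is self-containment. It uses only the Singleton-like bound and the invariance of rank distance under $\GL_n(\F_q)$. In effect it proves the special case of duality that is needed: an MRD code cannot be degenerate, equivalently its dual contains no vector of rank weight $1$. The paper's version is shorter, but it relies on the external duality theorem.
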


\begin{proof}
Suppose  that $C\subseteq C^{\perp_E}$. 
Then for every codeword $c=(c_1,\ldots,c_n)\in C$, we have
\[
0=\langle c,c\rangle_E
=
\sum_{i=1}^n c_i^2
=
\left(\sum_{i=1}^n c_i\right)^2,
\]
which implies $ \sum_{i=1}^n c_i=0$. Hence,
\[
C\subseteq H:=
\left\{x\in K^n:\sum_{i=1}^n x_i=0\right\}.
\]
The Euclidean dual of $H$ is
$
H^{\perp_E}=K\mathbf 1_n.
$
Since $C\subseteq H$, we get $H^{\perp_E}\subseteq C^{\perp_E},$ 
and so $\mathbf 1_n\in C^{\perp_E}.$

On the other hand, by MRD duality
\cite[Corollary~41]{ravagnani2016}, the Euclidean dual $C^{\perp_E}$ is MRD
with parameters $[n,n-k,k+1]_{q^m/q}.$
Thus every nonzero codeword of $C^{\perp_E}$ has rank at least $k+1\ge 2$,
contradicting the condition $\mathbf 1_n\in C^{\perp_E}$. This proves the proposition.
\end{proof}


\begin{thebibliography}{99}

\bibitem{artin1957}
Artin, E. (1957). Geometric Algebra. Interscience Publishers, New York.

\bibitem{brun2006}
Brun, T.~A., Devetak, I., and Hsieh, M.-H. (2006).
Correcting quantum errors with entanglement. Science, 314(5798), 436--439.



\bibitem{carlet2016}
Carlet, C., and Guilley, S. (2016). Complementary dual codes for counter-measures to side-channel attacks. Adv. Math. Commun., 10, 131--150.

\bibitem{carlet2018}
Carlet, C., Mesnager, S., Tang, C., Qi, Y., and Pellikaan, R. (2018). Linear codes over $\F_q$ are equivalent to LCD codes for $q>3$. IEEE Trans. Inform. Theory, 64(4), 3010--3017.

\bibitem{chen2023}
Chen, H. (2023). New MDS entanglement-assisted quantum codes from MDS Hermitian self-orthogonal codes. Des. Codes Cryptogr., 91, 2665--2676.

\bibitem{chen2023-2}
Chen, H. (2023). On the hull-variation problem of equivalent linear codes. IEEE Trans. Inform. Theory, 69, 2911--2922.

\bibitem{delacruz2021}
de la Cruz, J., Evilla, J. R., and \"Ozbudak, F. (2021). Hermitian rank metric codes and duality. IEEE Access, 9, 38479--38487.

\bibitem{delfosse2024}
Delfosse, N., and Z\'emor, G. (2024). Correction of circuit faults in a stacked quantum memory using rank-metric codes. arXiv:2411.09173.

\bibitem{delsarte1978}
Delsarte, P. (1978). Bilinear forms over a finite field, with applications to coding theory. J. Combin. Theory Ser. A, 25(3), 226--241.




\bibitem{gabidulin1985}
Gabidulin, E. M. (1985). Theory of codes with maximum rank distance. Problemy Peredachi Informatsii, 21(1), 3--16.

\bibitem{geiselmann1993}
Geiselmann, W., and Gollmann, D. (1993). Self-dual bases in $\F_{q^n}$. Des. Codes Cryptogr., 3, 333--345.


\bibitem{gorla2021}
Gorla, E. (2021). Rank-metric codes. In Concise Encyclopedia of Coding Theory, Chapman and Hall/CRC, pp. 227--250.


\bibitem{guenda2018}
Guenda, K., Jitman, S., and Gulliver, T.~A. (2018).
Constructions of good entanglement-assisted quantum error correcting codes. Des. Codes Cryptogr., 86(1), 121--136.

\bibitem{ho2026}
Ho, D., and Johnsen, T. (2026). On the hull-variation problem of equivalent vector rank-metric codes. Adv. Math. Commun., 22, 163--174.

\bibitem{islam2022}
Islam, H., and Horlemann, A.-L. (2023). Galois hull dimensions of Gabidulin codes. In Proc. 2023 IEEE Int. Symp. Inform. Theory (ISIT), Taipei, pp. 1834--1839. (arXiv:2211.05068, 2022).

\bibitem{jerkovits2021}
Jerkovits, T., Sidorenko, V., and Wachter-Zeh, A. (2021).
Decoding of space-symmetric rank errors. In Proc. 2021 IEEE Int. Symp. Inform. Theory (ISIT), Melbourne, pp.~658--663.


\bibitem{jungnickel1990}
Jungnickel, D., Menezes, A. J., and Vanstone, S. A. (1990). On the number of self-dual bases of $\mathrm{GF}(q^m)$ over $\mathrm{GF}(q)$. Proc. Amer. Math. Soc., 109(1), 23--29.

\bibitem{koetter2008}
Koetter, R., and Kschischang, F. R. (2008). Coding for errors and erasures in random network coding. IEEE Trans. Inform. Theory, 54(8), 3579--3591.

\bibitem{kshevetskiy2005}
Kshevetskiy, A., and Gabidulin, E. M. (2005). The new construction of rank codes. In Proc. IEEE Int. Symp. Inform. Theory (ISIT), Adelaide, pp. 2105--2108.

\bibitem{lidl1997}
Lidl, R., and Niederreiter, H. (1997). Finite Fields, 2nd ed. Encyclopedia of Mathematics and its Applications, vol. 20, Cambridge University Press, Cambridge.

\bibitem{ling2024}
Luo, G., Ezerman, M. F., Grassl, M., and Ling, S. (2024). Constructing quantum error-correcting codes that require a variable amount of entanglement. Quantum Inf. Process., 23, Paper No. 4.

\bibitem{matsumoto2000}
Matsumoto, R., and Uyematsu, T. (2000).
Constructing quantum error-correcting codes for $p^m$-state systems from classical error-correcting codes.
IEICE Trans. Fundamentals, E83-A(10), 1878--1883.



\bibitem{mullen2013}
Mullen, G. L., and Panario, D. (2013). Handbook of Finite Fields. CRC Press.

\bibitem{nebe2016}
Nebe, G., and Willems, W. (2016). On self-dual MRD codes. Adv. Math. Commun., 10(3), 633--642.

\bibitem{matsumoto2026}
Nizuka, R., and Matsumoto, R. (2026). Construction of quantum rank-metric codes using Hermitian orthogonality. arXiv:2605.02571.




\bibitem{ravagnani2016}
Ravagnani, A. (2016). Rank-metric codes and their duality theory. Des. Codes Cryptogr., 80(1), 197--216.

\bibitem{seroussi1980}
Seroussi, G., and Lempel, A. (1980). Factorization of symmetric matrices and trace-orthogonal bases in finite fields. SIAM J. Comput., 9(4), 758--767.

\bibitem{silva2008}
Silva, D., Kschischang, F. R., and Koetter, R. (2008). A rank-metric approach to error control in random network coding. IEEE Trans. Inform. Theory, 54(9), 3951--3967.

\bibitem{wan2003}
Wan, Z.-X. (2003). Lectures on Finite Fields and Galois Rings. World Scientific.

\end{thebibliography}
\end{document}